\newcommand{\SUS}{\subset}
\newcounter{margcount} 
\newcommand{\qqquad}[0]{\qquad\qquad}
\newcommand{\qqqquad}[0]{\qqquad\qqquad}
\newcommand{\BS}[0]{\backslash}
\newcommand{\NT}[0]{\notag}
\newcommand{\VEC}[2][r]{
  \gdef\@VORNE{1}
  \left(\hskip-\arraycolsep%
    \begin{array}{#1}\vekSp@lten{#2}\end{array}%
  \hskip-\arraycolsep\right)}
\newcommand{\VECMOD}[2][r]{
  \gdef\@VORNE{1}
  \left|\hskip-0.5\arraycolsep%
    \begin{array}{#1}\vekSp@lten{#2}\end{array}%
  \hskip-0.5\arraycolsep\right|}
\def\vekSp@lten#1{\xvekSp@lten#1;vekL@stLine;}
\def\vekL@stLine{vekL@stLine}
\def\xvekSp@lten#1;{\def\temp{#1}%
  \ifx\temp\vekL@stLine
  \else
    \ifnum\@VORNE=1\gdef\@VORNE{0}
    \else\@arraycr\fi%
    #1%
    \expandafter\xvekSp@lten
  \fi}
\newcommand{\EE}{{\mathcal E}}
\newcommand{\FF}{{\mathcal F}}
\newcommand{\TT}{{\mathcal T}}
\newcommand{\R}{\mathbb{R}}
\newcommand{\N}{\mathbb{N}}
\newcommand{\Z}{\mathbb{Z}}
\newcommand{\C}{\mathbb{C}}
\newcommand{\alp}{\alpha}
\newcommand{\bet}{\beta}
\newcommand{\lam}{\lambda}
\renewcommand{\phi}{\varphi}
\newcommand{\Lam}{\Lambda}
\newcommand{\nin}{\not\in}
\renewcommand{\Re}{\mathfrak{Re}}
\def\XXint#1#2#3{{\setbox0=\hbox{$#1{#2#3}{\int}$}
\vcenter{\hbox{$#2#3$}}\kern-.5\wd0}}
\newcommand{\upref}[2]{\hspace{-0.8ex}\stackrel{\eqref{#1}}{#2}} 
\newcommand{\lupref}[2]{\hspace{0ex} \stackrel{\eqref{#1}}{#2}} 
\definecolor{verylightblue}{rgb}{0.9,0.95,1}
\definecolor{lightblue}{rgb}{0.7,0.7,1}
\definecolor{eqyellow}{rgb}{0.9375,0.8984,0.5469}
\definecolor{subeqyellow}{rgb}{1,0.9373,0.8353}
\definecolor{mygreen}{rgb}{0.3, 0.6, 0.3} 
\definecolor{verylightgreen}{rgb}{0.95, 0.95, 0.95} 
\definecolor{verydarkgreen}{rgb}{0, 0.5, 0}
\definecolor{darkgreen}{rgb}{0.85, 0.85, 0.85}  
\definecolor{mydarkgreen}{rgb}{0, 0.5, 0} 
\definecolor{mybrown}{rgb}{0.85, 0.4, 0.3}
\definecolor{verylightbrown}{rgb}{0.98, 0.72, 0.58}
\definecolor{verydarkbrown}{rgb}{0.44, 0.26, 0.26}
\definecolor{orange}{rgb}{1, 0.5, 0}
\definecolor{BurntOrange}{rgb}{1,0.356,0}
\definecolor{mydarkred}{rgb}{1,0.086,0.255}
\definecolor{RoseVYDP}{rgb}{0.84,0.086,0.255}
\definecolor{dgreen}{rgb}{0, 0.8, 0.5}     
\definecolor{CanaryBRT}{rgb}{1,0.76,0.26}
\definecolor{cyan}{rgb}{0, 1, 1}
\definecolor{verylightgray}{rgb}{0.95, 0.95, 0.95}
\definecolor{verylightgray}{rgb}{0.95, 0.95, 0.95}
\definecolor{verylightred}{rgb}{1, 0.8, 0.78}
\definecolor{verylightyellow}{rgb}{0.99, 0.98, 0.5}
\newcommand{\ignore}[1]{{}}
\newcommand{\NNN}[2]{\|#1\|_{#2}}
\newcommand{\skp}[2]{(#1, #2)}
\newcommand{\skpL}[3]{(#1, #2)_{#3}}
\newtheorem{theorem}{Theorem}[section]
\newtheorem{definition}[theorem]{Definition}
\newtheorem{corollary}[theorem]{Corollary}
\newtheorem{proposition}[theorem]{Proposition}
 \newtheorem{lemma}[theorem]{Lemma}
\theoremstyle{definition} \newtheorem{remark}[theorem]{Remark}
\DeclareMathOperator{\argmin}{argmin}
\newcommand{\digint}[2]{\ensuremath{\llbracket #1, #2 \rrbracket}}
\numberwithin{equation}{section}  \def\XXint#1#2#3{{\setbox0=\hbox{$#1{#2#3}{\int}$}
    \vcenter{\hbox{$#2#3$}}\kern-.5\wd0}} 
\renewcommand{\em}[1]{\underline{#1}}
\newcommand\widecheck[1]{%
\savestack{\tmpbox}{\stretchto{%
  \scaleto{%
    \scalerel*[\widthof{\ensuremath{#1}}]{\kern-.6pt\bigwedge\kern-.6pt}%
    {\rule[-\textheight/2]{1ex}{\textheight}}
  }{\textheight}%
}{0.5ex}}%
\stackon[1pt]{#1}{\scalebox{-1}{\tmpbox}}%
}
\newcommand\avsuminner[2]{%
  {\sbox0{$\m@th#1\sum$}%
    \vphantom{\usebox0}%
    \ooalign{%
      \hidewidth \smash{\vrule height\dimexpr\ht0+1pt\relax
        depth\dimexpr\dp0+1pt\relax}%
      \hidewidth\cr $\m@th#1\sum$\cr }%
  }%
} \makeatother
\author[*]{\rm Laurent B\'{e}termin} \author[**]{\rm Hans Kn\"upfer}
\affil[*]{\normalsize{QMATH, Department of Mathematical Sciences, University of Copenhagen, Universitetsparken 5, DK-2100 Copenhagen \O, Denmark. \texttt{betermin@math.ku.dk}. ORCID id: 0000-0003-4070-3344}}
\affil[**]{Institute of Applied Mathematics and IWR, University of Heidelberg, Im Neuenheimer Feld 205, 69120 Heidelberg, Germany. \texttt{knuepfer@uni-heidelberg.de} }
\begin{document}

\title{On Born's conjecture about optimal distribution of charges for an
  infinite ionic crystal} \date\today

\maketitle

\begin{abstract}
  We study the problem for the optimal charge distribution on the sites of a
  fixed Bravais lattice. In particular, we prove Born's conjecture about the
  optimality of the rock-salt alternate distribution of charges on a cubic
  lattice (and more generally on a $d$-dimensional orthorhombic
  lattice). Furthermore, we study this problem on the two-dimensional triangular
  lattice and we prove the optimality of a two-component honeycomb distribution
  of charges. The results holds for a class of completely monotone interaction
  potentials which includes Coulomb type interactions for $d\geq 3$. In a more
  general setting, we derive a connection between the optimal charge problem and
  a minimization problem for the translated lattice theta function.
\end{abstract}

\noindent
\textbf{AMS Classification:}  Primary 49S99 Secondary 82B20\\
\textbf{Keywords:} Calculus of variations; Lattice energy; Theta functions; 
Electrostatic energy; Ewald summation. \\

\tableofcontents

\section{Introduction and setting}

\subsection{Introduction} \label{ss-intro} %

Ionic compounds are substances formed by charged ions, held together by
electrostatic forces. The ions are typically aligned in regular crystalline
structures, in an arrangement that minimizes the total interaction energy
between the positive ions ({\it cations}) and negative ions ({\it anions}).  A
large class of such materials are salts, formed by a reaction of an acid and a
base.  The material properties of these ionic compounds such as their high
melting point and their brittleness is determined by their specific lattice
structure and the distribution of charges within the lattice. A variety of
crystal lattice structures are observed as a function of the relative quantity
and size of the ions. While prediction of the expected lattice structure have
been made using contact number calculations \cite{Pauling}, in general, the
crystallization problem for ionic bounds has not been solved. Indeed, to
investigate stability of ionic lattice structures, the total interaction energy
between the particles needs to be calculated and compared for all possible
lattice configurations and distributions of the ions on these lattices. In
this paper, we consider the simpler question for optimizing the charge
distribution on a given crystal lattice. In particular, we prove Born's
conjecture about the optimal charge distribution for charges located at the
sites of a cubic or orthorhombic lattice.

\medskip

 The question asked by Born in \cite{Born-1921} (as recalled in
\cite{Latticesums}) is the following:
\begin{center}
  \begin{citation}
    ``\it How to arrange positive and negative charges on a simple cubic lattice
    of finite extent so that the electrostatic energy is minimal?'\footnote{``Ein
      endliches St�ck eines einfachen kubischen Raumgitters soll so mit gleich
      vielen positiven und negativen Ladungen von gleichem absoluten Betrage
      besetzt werden, da� die elektrostatische Energie des Systems m�glichst
      klein wird."}
  \end{citation}
\end{center}
His conjecture is that the alternate distribution of charges, i.e. when
$(-1)^{m+n+k}$ is the charge at the point $p=(m,n,k)\in \Z^3$, is the global
minimizer of the electrostatic energy among all the distributions of charges
with prescribed total charge. In \cite{Born-1921}, Born proved this conjecture
in dimension $d=1$ and he obtained the local minimality of the alternate
structure in dimension $d=3$. In this paper, we prove Born's conjecture in a
general setting of $d$-dimensional lattices and for a large class of interaction
energies. We further derive a connection of this problem to a minimization
problem for the translated lattice theta function associated with the dual of
the given lattice.

\begin{figure}[!h]
  \centering
  \includegraphics[width=8cm]{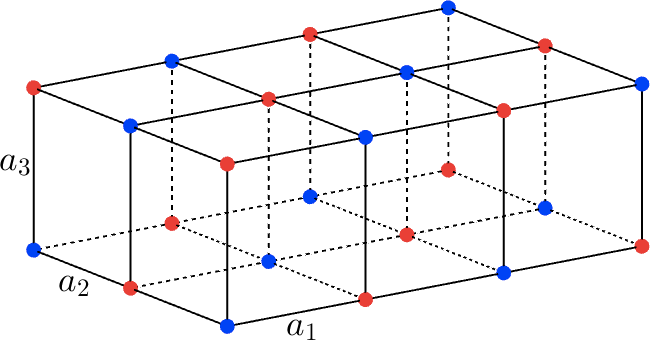}
  \caption{Optimal charge distribution for the orthorhombic lattice for $d = 3$. The points in blue (resp. red) have charge $-1$ (resp. $1$).}
  \label{fig-cubecharged}
\end{figure}

\medskip

We consider an ensemble of charges $\phi_x$ located at the vertices $x \in X$ of
a given $d$-dimensional Bravais lattice $X \subset \R^d$.  For technical
reasons, we also assume that the charges are $N$-periodic in each principal
lattice direction for some $N \in \N$ and can be represented by the finite
  sublattice $K_N$ (for the precise definitions, we refer to the next section).
The total interaction energy per lattice point can then be written as
\begin{align} \label{defenergyintro} %
  \EE_{X,f}[\phi] \ %
  = \frac{1}{2N^d} \sum_{y\in K_N} \sum_{x\in X\BS \{0\} } \phi_y
  \phi_{x+y}f(x),
\end{align}
for some radially symmetric interaction potential $f : \R^d \to [0,\infty)$. In
the case, when $f$ is not absolutely summable on $X \BS \{ 0 \}$, the classic
method of Ewald summation \cite{Ewald1} is used to give a meaning to the
infinite sum in \eqref{defenergyintro}. The class of interaction potentials $f$
we consider in particular includes all potentials $f(x) = F(|x|^2)$ for some
completely monotone function $F$. In particular, this includes all Riesz
potentials of the form $f(x)=|x|^{-s}$ for some $s > 0$. We consider the
minimization problem
\begin{align*}
   \phi \mapsto \EE_{X,f}[\phi]
\end{align*}
for all periodic charge
configurations satisfying a constraint for the total charge (see
  \eqref{totalcharge}) and for any given Bravais lattice $X\subset \R^d$.

\medskip

In our first result, we show that this minimization problem is related to a
minimization of the translated lattice theta function
\begin{align} \label{thet-1} %
 z\mapsto \theta_{X^*+z}(\alp) = \sum_{p\in X^*} e^{-\pi \alp |p+z|^2}, 
\end{align}
associated to the dual lattice $X^*$ in terms of the variable $z\in \R^d$ for
given $\alp > 0$ (see Theorem \ref{thm-1}).  At the core of our proof lies an
argument, originally due to Montgomery \cite{Mont} and generalized by one of the
authors in \cite{BeterminPetrache}. Theorem \ref{thm-1} can be used to calculate
the optimal charge distribution in specific Bravais lattices if the minimization
problem for the translated lattice theta function can be solved for these
lattices. We first consider the situation of a $d$-dimensional orthorhombic
lattice. In this case, the minimizer $z$ of \eqref{thet-1} is the center of the
primitive cell (for any $\alp$) \cite{BeterminPetrache}. For the triangular
lattice case, the minimizers are the two barycenters of the primitive triangles
forming the primitive rhombic cell \cite{Baernstein-1997}. In both case, the
knowledge of these minimizers gives us the minimal configuration of charges,
that are the alternate rock salt configuration in the orthorhombic case (Theorem
\ref{thm-2}), see Fig. \ref{fig-cubecharged}, and the honeycomb distribution for
the triangular lattice (Theorem \ref{thm-3}), see
Fig. \ref{fig-triangcharged}. In particular, Theorem \ref{thm-2} gives an
affirmative answer to Born's conjecture, cited above. Let us note that, in
  the case of Riesz potentials which are not summable over the lattice, also the
  analytic continuation of Epstein's zeta function has been used to describe the
  lattice energies in this case, see e.g. \cite{Emersleben}. Indeed, this
  approach yields the same energy as the Ewald method used in our approach.

\medskip

We note that the translated lattice theta function appears in several
mathematical models for physical systems with different kind of particles. For
example, Ho and Mueller \cite{Mueller:2002aa} wrote the interaction between two
Bose-Einstein condensates in terms of translated lattice theta functions, and
the same is done by Trizac et al. \cite{Samaj12,TrizacWigner16} in the context
of Wigner bilayers. Mathematically, the problem of minimizing
$z\mapsto \theta_{X+z}(\alp)$ was studied by Baernstein in
\cite{Baernstein-1997} for $X=\Lambda_1$ a two-dimensional triangular lattice
and by the first author in \cite{BeterminPetrache} for more general lattices.

\medskip

Let us also recall related work for optimal lattice configurations for systems
with the same kind of particles. The studies of lattice theta functions and
Epstein zeta functions are originally due to Krazer and Prym \cite{KrazerPrym}
and Epstein \cite{Epstein1}. Later, the problem of minimizing the Epstein zeta
function among two-dimensional Bravais lattices with a fixed density was studied
by Rankin \cite{Rankin}, Ennola \cite{Eno2}, Cassels \cite{Cassels} and Diananda
\cite{Diananda} (see also the recent review \cite{Henn}). They proved the
optimality of the triangular lattice (also called Abrikosov lattice in the
context of superconductivity). Montgomery \cite{Mont} proved the optimality of
the triangular lattice, among Bravais lattices with a fixed density, for the
lattice theta function $X\mapsto \theta_X(\alp)$, see also \cite[Appendix
A.2.]{NonnenVoros}. This result and its consequence \cite{CohnKumar} has been
used to show that the triangular configuration is a ground state of total
interaction energy for a class of interaction potentials. Let us cite the works
of Sandier and Serfaty \cite{Sandier_Serfaty} on Coulomb gases and
superconductivity and their consequences for the logarithmic energy on the
$2$-sphere in \cite{Betermin:2014rr}, but also the work of Aftalion, Blanc and
Nier \cite{AftBN} on Bose-Einstein Condensates and that of Nonnenmacher-Voros
\cite{NonnenVoros} on chaotic maps over a torus phase space. Furthermore,
Montgomery's result was used by the first author and Zhang in
\cite{Betermin:2014fy,BetTheta15} in order to prove the optimality of the
triangular lattice at high density for more general interaction
energies. Furthermore, the authors also used this result in order to prove the
optimality of the triangular lattice for radially symmetric, spatially extended
particles interacting via a radial potential in
\cite{BetKnuepfspatiallyextended}. In dimensions three, as recalled in
\cite[Section 2.5]{Blanc:2015yu}, the proof of the Sarnak-Str\"ombergsson
conjecture \cite{SarStromb} about the optimality, for the lattice theta
function, of the Body-Centered-Cubic (resp. Face-Centered-Cubic) lattice at high
(resp. low) density , would be an important advance both in analytic number
theory and in solid-state physics, see also \cite{SerfRoug15}. Some recent
advances have been made by the first author in
\cite{BeterminPetrache,Beterminlocal3d}, using recent results about Jacobi theta
functions in \cite{Faulhuber:2016aa}. In higher dimensions, the local minimality
of some lattices for the lattice theta function, the Epstein zeta function and
other related lattice energies was studied by Coulangeon et al. in
\cite{Coulangeon:kx,Coulangeon:2010uq,CoulLazzarini,coulschurm18}. The above investigation
are made under the assumption that the configuration can be expressed as a
Bravais lattice. In a different approach without periodicity assumptions optimal
lattice configurations have been studied e.g. in
\cite{VN1,VN2,Rad2,Rad3,Crystal,TheilFlatley,ELi,Stef1,Stef2,Luca:2016aa}.

\medskip

The optimality of the alternate configuration (also called ``chessboard
  configuration") in a different setting is also discussed in
  \cite{toricgrid}. Considering a flat torus $\TT$ of a certain specific size,
  composed by $N\in 2\N$ points, associated with an orthorhombic lattice, the
  authors ask the following question: How can $N/2$ points on this grid be
  located such that the associated energy
  $E=\sum_{p\neq q\in \TT} f(\delta(p,q))$ is minimized? Here, $f$ is a radial
  function and $\delta$ is the Lee distance, a graph distance counting the
  minimal number of edges of the grid connecting $p$ to $q$. Then, relaxing the
  problem by minimizing
  $\tilde{E}=\sum_{p\neq q\in \TT} f(\delta(p,q))w(p)w(q)$, where $w(p)$ is a
  weight (corresponding to our charges), and translating this problem in Fourier
  space, they prove their result for a specific choice of $f$ (including
  completely monotone functions) and $\TT$. It should be noted that there are
  some important difference in terms of the result and model in \cite{toricgrid}
  with respect to this paper, even though the strategy of the proof is similar.
  One difference is that the results in \cite{toricgrid} are concerned with
  finite sums in contrast to the infinite number of (long range) interactions
  energies considered in this work. Another difference is the graph distance
  considered in \cite{toricgrid} which requires combinatorically arguments. In
  particular, they numerically show (see \cite[Sect. 6.2]{toricgrid}) that the
  Lee distance cannot be replaced by the Euclidean distance for their result
  involving completely monotone functions. On the other hand, it doesn't seem to
  be straightforward to use our arguments to derive the results of
  \cite{toricgrid}. We also note that the results in this paper apply to any
  lattice for which the minimizer of the translated theta function is known.
\medskip

\textbf{Structure of the paper:} In the remaining parts of the first section, we
introduce the mathematical formulation for the model, introduce some needed
  special functions and present useful identities for those.
In Section \ref{sec-main}, we state our main
results in Theorems \ref{thm-1}--\ref{thm-5}. The proofs of these
theorems are given in Section \ref{sec-proofs}.

\medskip

\textbf{Notation:} We will write $(e_i)_{1\leq i\leq d}$ for the canonical basis
of $\R^d$. For any $x,y \in \R^d$, we denote the Euclidean scalar product by
$x\cdot y$.  We also use the notation $\digint{a}{b} := [a,b] \cap \Z$. 

We recall that a Bravais lattice in $\R^d$ is a set points of the form
  $X=\bigoplus_{i=1}^d \Z u_i \subset \R^d$ for a given set of linearly
  independent vectors $u_i \in \R^d$ with $i \in \digint{1}{d}$. We call $A_X$
  the generator matrix of the lattice $X$, i.e. $A_X \Z^d=X$. The associated
  quadratic form assigned with the Bravais lattice is given by
  $q_X(n) = \NNN{\sum_{i=1}^d n_i u_i}{}^2$ for $n \in \Z^d$.  The dual
    lattice $X^*$ of the lattice $X$ is given by
  \begin{align*}
    X^*=\left\{ p \in \R^d \ : \ p\cdot x\in \Z \text{ for all $x\in X$}
    \right\}.    
  \end{align*}

\subsection{The model} \label{ss-setting} %

We consider configurations of charges located on the sites of
  $d$-dimensional Bravais lattices $X \in \R^d$ for any $d \geq 1$.  We will
assume without loss of generality that all considered Bravais lattices have unit
density, i.e. the unit cell $Q:= \sum_{i=1}^d [0,1) u_i$ of the lattice has unit
volume. The general case can be recovered by rescaling the lattice.

\medskip

We consider charged lattices, where a charge is assigned to every lattice point:
\begin{definition}[Charged lattice] \label{def-lattice} %
  Let $d \geq 1$.
  \begin{enumerate}
  \item A charged lattice $L = (X,\phi)$ is a Bravais lattice $X = \bigoplus_{i=1}^d \Z u_i$ together with a
    function $\phi : X \to \R$ such that for any $x\in X$, the point $x$ has charge $\phi_x = \phi(x)$.
  \item We say that the charge distribution is $N$-periodic if it is periodic with period $N$ in any coordinate direction, i.e.
  \begin{align*}
    \phi(x+ N u_i) = \phi(x) %
    &&\text{for any $x \in X$ and any $i\in \digint{1}{d}$}.
  \end{align*}
  The charge distribution is called periodic if it is $N$-periodic for some
  $N \in \N$. We define the finite sublattice $K_N\subset X$ by
\begin{align}\label{def-KN}
  K_N   \ :=  \ \Big\{x=\sum_{i=1}^d m_i u_i\in  X \ : \  m_i\in \digint{0}{N-1}   \text{ for all } i \in \digint{1}{d} \Big\},
\end{align}
and we call $K_N^*$ the corresponding sublattice in $X^*$.
\item  The space of $N$-periodic functions on the lattice is denoted by
  $\Lam_N(X)$. It is equipped with inner product and norm by
\begin{align*}
  \skpL{\phi}{\psi}{K_N} = \sum_{y \in K_N} \phi(y) \overline{\psi(y)},&& \NNN{\phi}{} = \sqrt{(\phi,\phi)_{K_N}}\ .
\end{align*}
  \end{enumerate}
\end{definition}
Note that an $N$-periodic charge configuration is uniquely given by its values
on the $N^d$ points on the finite sublattice $K_N$.

\medskip

We will consider the following class of potentials:
\begin{definition}[The class of interaction potentials] \label{def-FF}%
  For $d\geq 1$, we say that $f\in \FF$ if $f:\R^d\to [0,+\infty)$ and if, for
  any $x\in \R^d \BS \{ 0 \}$, we have
  \begin{align} \label{FF-2} %
      f(x)=\int_0^{\infty} e^{-|x|^2t} d\mu_f(t),
    \end{align}
   where $\mu_f$ is a non-negative Borel measure.
If $f$ is absolutely summable over $X \BS \{ 0 \}$, we use the notation $f \in \ell^1(X \BS \{ 0 \})$.
\end{definition}
Note that the assumption in Definition \ref{def-FF} corresponds to the
particular case of G-type potential defined in \cite[Definition
1]{SaffLongRange} where $\mu_f$ is non-negative.
\begin{remark}[Relation to completely monotone functions] \label{rem-FF}
  \text{} %
  The class $\FF$ of admissible potentials is quite large. Indeed, by the Hausdorff-Bernstein-Widder theorem
  \cite{Bernstein-1929}, \eqref{FF-2} is equivalent to $f(x) = F(|x|^2)$ for
  some completely monotone function $F:(0,\infty)\to \R$, i.e. for any $F$
  which satisfies
  \begin{align*}
    (-1)^k F^{(k)}(r)\geq 0 && \forall r>0,\forall k \in \N.
  \end{align*}
  In particular, every
  potential of type
  \begin{align*}
    r^{-s}, \ e^{-\lam r^\alp}, \ \frac{e^{-\lam r}}r, \ \frac{e^{-\lam \sqrt{r}}}{\sqrt{r}} &&
                                                                                                  \text{ for $s >0, $ $\lam>0$, $\alp\in (0,2]$}, 
  \end{align*}
  is included in the class $\FF$.
  Further examples can be constructed by noting that if $f,g$ are completely
  monotone, then $\alp f+g$, $\alp > 0$ and $fg$ are completely monotone.  On
  the other hand, the class of Lennard-Jones-type potentials of the form
  $V(r)= r^{-p} - b r^{-q}$ with $b > 0$ and $p>q$ are not included in $\FF$.
  
\end{remark}
\begin{remark}[The logarithmic potential in $\R^2$] %
  In two dimensions, the Coulomb potential $f(x)=-\log |x|$ does not belong to
  $\FF$. However, we believe that the results in this paper should hold for
  this potential as well. Indeed, this should follow by an approximation
  argument together with the formula
  \begin{align*}
    -\log |x|=\frac{1}{2} \lim_{\varepsilon\to 0^{+}}\Big( \int_\varepsilon^{\infty} \frac{e^{-t |x|^2}}{t}dt + \gamma +\log \varepsilon   \Big),
  \end{align*}
  where $\gamma \approx 0.577$ is the Euler-Mascheroni constant.
\end{remark}
We assume that the interaction energy between two points of charges $\phi_x$,
$\phi_y$ at positions $x,y \in X$ is given by $\phi_x \phi_y f(x-y)$ for some
rotationally symmetric interaction potential $f \in \FF$. If $f$ is absolutely
summable over $X \BS \{ 0 \}$, the total potential energy is obtained directly
by summing the interaction energies for all points of the lattice. If $f$ is not
summable, the method of Ewald summation \cite{Ewald1} (introduced by Ewald in
his 1912 doctoral thesis) can be used to still define the total energy of the
system assuming that the total net charge is zero. More precisely, as Born did
in \cite{Born-1921} for the Coulomb potential, we use the classic Gaussian
convergence factors method, as in
\cite{deLeeuw:1980aa,Ewaldpolytropic,SaffLongRange}:
\begin{definition}[Interaction energy] \label{def-energy} %
  Let $N\geq 2$ and $L = (X,\phi)$ be a charged lattice with $N$-periodic charge
  distribution $\phi$ and let $f\in \FF$.  If $f \nin \ell^1(X \BS \{ 0 \})$, we
  assume the charge neutrality condition,
  \begin{align} \label{neutral} %
    \sum_{y\in K_N} \phi_y = 0.
  \end{align}
  The energy per particle is given by
  \begin{align} \label{ene} %
    \EE_{X,f}[\phi] \ %
    := \lim_{\eta \to 0} \Big(\frac{1}{2N^d} \sum_{y\in K_N} \sum_{x\in X\BS \{0\} } \phi_y \phi_{x+y}f(x) e^{-\eta
    |x|^2} \Big). %
  \end{align}
\end{definition}
We note that there are a variety of different ways to define the lattice energy
for non-integrable interaction potentials (see e.g.  \cite{Latticesums}). The
method of Ewald summation is commonly used to calculate the energy for different
charged systems and has been optimized for computational speed, see
e.g. \cite{deLeeuw:1980aa,PerramLeeuw,Ewaldpolytropic}. We also note that the
Ewald summation can also be used to calculate an energy for the case of
non-neutral charge configurations \cite{Ewaldpolytropic,AlastueyJancovici}. If
the interaction potential is given by a non-integrable Riesz potential,
i.e. $f(x) = |x|^{-s}$ for $s \in (0,d]$, then another way to define the energy
is by analytic extension of the Epstein zeta function, introduced in
\cite{Epstein1}. We give the definition of the Epstein zeta function in slightly
less generality as needed for our purposes.
\begin{definition}
  Let $X\subset \R^d$, $d \geq 1$, be a Bravais lattice associated with the
  positive definite quadratic form $q_X$. Then Epstein's zeta function is
  defined by
  \begin{align*}
     \textnormal{Z} \VECMOD{0;z}(q_X;s):=\sum_{n\in \Z^d\backslash \{0\}} \frac{e^{2i\pi n\cdot z}}{[q_X(n)]^{\frac s2}} && %
                                                                                                          \text{for $z\in \R^d$,  $s \in \{ \C : \Re s > d \}.$}
  \end{align*}
\end{definition}
This function admits an analytic continuation beyond its domain of absolute
convergence. For $s \in \C$ with $\Re s > 0$ and
$z\not\in X^*$, the extension is given by
\begin{align}\label{AnalyticEpstein}
  &\pi^{-\frac s2}\Gamma(\frac{s}{2})\textnormal{Z} \VECMOD{0;z} (q_X;s) \\
  &\qquad = -\frac{2}{s}+\int_1^{\infty}\sum_{n\in \Z^d\backslash \{0\}} e^{2i\pi n\cdot z} e^{-\pi t q_X(n)} t^{\frac s2-1}dt 
    +\int_1^{\infty}\sum_{n\in \Z^d} e^{-\pi t q_{X^*}(n+z) } t^{\frac {d-s}2-1}dt, \NT
\end{align}
where $q_{X^*}$ is the quadratic form associated with the dual lattice $X^*$,
see \cite{Epstein1}.

\medskip

\subsection{Discrete Fourier transform and convolution}

The proofs are formulated in terms of the discrete Fourier transform (for an
introduction, see e.g. \cite[Chapter 6]{DFT}) on the space of $N$-periodic
functions $\Lam_N(X)$ on the lattice $X$. We first note that an orthonormal
basis of $\Lam_N(X)$ is given by the functions , we define the functions by the
functions $e^{(k)} \in \Lam_N(X)$ where
\begin{align*}
  e^{(k)}(y)= \frac 1{N^{\frac d2}} e^{\frac{2\pi i}N y \cdot k}.
\end{align*}
The discrete Fourier transform is then defined as follows:
\begin{definition}[Discrete Fourier transform] %
  For any $\phi \in \Lam_N(X)$, its discrete Fourier transform $\widehat \phi \in \Lam_N(X^*)$ is given by
  \begin{align*}
    \widehat \phi(k) = \skpL{\phi}{e^{(k)}}{K_N} &&\text{for $k \in X^*$}.
  \end{align*}  
  where $e^{(k)}(y) := \frac 1{N^{\frac d2}} e^{\frac{2\pi i}N y \cdot k}$. For
  $\psi \in \Lam_N(X^*)$, the inverse Fourier transform is
  \begin{align*}
    \widecheck \psi(x) = \skpL{\psi}{\overline{e^{(x)}}}{K_N^*} &&\text{for $x \in X$}.
  \end{align*}
\end{definition}
Since the functions $e^{(k)}$ form an orthonormal basis, the Fourier transform
is a bijective map $\Lam_N(X) \to \Lam_N(X^*)$ whose inverse is given by the
inverse Fourier transform. Furthermore, Plancherel's identity holds with
constant $1$, i.e.
\begin{align*}
  \skpL{\phi}{\phi}{K_N} = \skpL{\widehat \phi}{\widehat \phi}{K_N^*},
\end{align*}
since the discrete Fourier transform corresponds to the application of a unitary
matrix.  A simple calculation shows that
$\widehat{\phi * \psi}(k) = \widehat \phi (k) \widehat \psi (k)$ for any
$k\in X^*$, where the convolution is defined by
\begin{align*}
  (\phi * \psi)(p) = \sum_{q \in K_N} \phi_q \psi_{p-q}.
\end{align*}

\subsection{Theta functions and useful identities} \label{ss-theta} %

Theta functions play an important role in different fields of mathematics. For
the computation of lattice sums, the following Jacobi theta function is useful:
\begin{definition}[Jacobi Theta functions] \label{def-theta} The third Jacobi theta
    function is defined by
\begin{align}\label{def-jacobitheta}
\vartheta_3(\xi; z) := \sum_{k\in \Z} e^{i\pi k^2 z + 2 i \pi k \xi},  \quad \Im (z)>0,\xi\in \C.
\end{align}
\end{definition}

We recall some useful identities for the Jacobi theta function restricted to the
upper imaginary axis. This restriction has been considered by Montgomery in
\cite{Mont} (he wrote $\theta(t,\beta) := \vartheta_3(\bet, it)$) in the
context of lattice sums:
\begin{lemma} \label{lem-theta} %
  Let $t > 0$ and let $\bet \in \R$. Then
  \begin{enumerate}
  \item\label{theta-ii} %
    $\displaystyle \vartheta_3(\beta; it)=\prod_{r=1}^{\infty} \left( 1-e^{-2\pi rt}
    \right)\left( 1+2e^{-(2r-1)\pi t}\cos(2\pi\beta)+e^{-2(2r-1)\pi t} \right).$
  \item $\displaystyle \vartheta_3(\beta; it) > 0$.
  \item The map $\beta\mapsto \vartheta_3(\beta; it)$ is $1$-periodic. Furthermore, we
    have
    \begin{align} \label{bet-12} %
     \vartheta_3\Big(\frac{1}{2}-\beta; it\Big)=\vartheta_3\Big(\frac{1}{2}+\beta; it\Big) %
      && \text{ for any
                                                             $\beta\in [0,\tfrac
                                                             12]$.}
    \end{align}
  \end{enumerate}
\end{lemma}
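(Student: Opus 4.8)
The plan is to obtain (i) as the Jacobi triple product identity restricted to the imaginary axis, and then to read off (ii) and (iii) as direct consequences of the product formula (or, for (iii), directly from the defining series). Thus essentially all the work sits in part (i), while the remaining two parts are immediate.

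For (i) I would write $q := e^{-\pi t}\in(0,1)$ and $w := e^{2\pi i\beta}$, so that $\vartheta_3(\beta;it)=\sum_{k\in\Z}q^{k^2}w^{k}$, while the claimed right-hand side is $\prod_{r\geq 1}(1-q^{2r})(1+q^{2r-1}w)(1+q^{2r-1}w^{-1})$; here I have used $(1+q^{2r-1}w)(1+q^{2r-1}w^{-1})=1+2q^{2r-1}\cos(2\pi\beta)+q^{2(2r-1)}$. This is exactly the Jacobi triple product. To prove it, I fix $q$ and set $P(w):=\prod_{r\geq 1}(1+q^{2r-1}w)(1+q^{2r-1}w^{-1})$, which converges absolutely and locally uniformly on $\C\setminus\{0\}$ because $\sum_r q^{2r-1}<\infty$, hence defines a holomorphic function there. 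Reindexing $r\mapsto r+1$ in the two factors yields the functional equation $P(q^2 w)=(qw)^{-1}P(w)$. Writing the Laurent expansion $P(w)=\sum_{n\in\Z}c_n w^n$ and comparing coefficients forces the recursion $c_{n}=q^{2n-1}c_{n-1}$, hence $c_n=q^{n^2}c_0$, so that $P(w)=c_0\sum_{n\in\Z}q^{n^2}w^n$. It then only remains to identify the normalization $c_0=\prod_{r\geq 1}(1-q^{2r})^{-1}$.

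Part (ii) is immediate from the product: for $t>0$ one has $1-q^{2r}=1-e^{-2\pi rt}>0$, and for real $\beta$ the paired factor is $(1+q^{2r-1}w)(1+q^{2r-1}w^{-1})=|1+q^{2r-1}e^{2\pi i\beta}|^2>0$, strictly positive because $q^{2r-1}<1$ excludes the root $-1$; hence $\vartheta_3(\beta;it)>0$. For (iii), $1$-periodicity is clear already from the series, since $e^{2\pi i k(\beta+1)}=e^{2\pi i k\beta}$ for $k\in\Z$; equivalently the product depends on $\beta$ only through the $1$-periodic function $\cos(2\pi\beta)$. The reflection identity \eqref{bet-12} then follows because $\cos(2\pi(\tfrac12-\beta))=\cos(2\pi(\tfrac12+\beta))=-\cos(2\pi\beta)$, so both sides equal the same product; alternatively one combines the evenness $\vartheta_3(-\xi;z)=\vartheta_3(\xi;z)$ (from $k\mapsto -k$ in the series) with $1$-periodicity.

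The main obstacle is the single step of proving the triple product identity (i), and within it the determination of the constant $c_0(q)=\prod_{r\geq1}(1-q^{2r})^{-1}$: the functional equation pins down the $w$-dependence (equivalently the $\beta$-dependence) effortlessly, but fixing the overall $q$-dependent normalization requires a separate argument, for instance specializing $w$ and exploiting a doubling/telescoping relation among the products, or tracking the lowest-order terms of the $q$-expansion. Everything after (i) is routine.
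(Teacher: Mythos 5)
Your parts (ii) and (iii) are correct and essentially coincide with the paper's argument: the paper also derives positivity from the product formula (it rewrites the paired factor as $\sin^2(2\pi\beta)+\bigl[e^{-(2r-1)\pi t}+\cos(2\pi\beta)\bigr]^2$, which is literally your $|1+q^{2r-1}e^{2\pi i\beta}|^2$ expanded), and it obtains periodicity and \eqref{bet-12} from the defining series via $k\mapsto -k$, exactly as in your alternative route. The real divergence is in (i): the paper does not prove the product formula at all but cites it (Stein--Shakarchi, Ch.~10, Thm.~1.3), whereas you attempt a self-contained proof of the Jacobi triple product.

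That attempt, as written, has a genuine gap which you yourself flag: the functional equation $P(q^2w)=(qw)^{-1}P(w)$ only yields $P(w)=c_0(q)\sum_{n\in\Z}q^{n^2}w^n$, and the identification $c_0(q)=\prod_{r\ge1}(1-q^{2r})^{-1}$ — which is precisely the nontrivial content of the triple product — is left as a promissory note ("it only remains to identify the normalization"). Without it, statement (i) is not established, and neither is the exact positivity claim in (ii) insofar as it is read off from (i) as an identity for $\vartheta_3$ rather than for $P$. To close the gap you would need to actually execute one of the standard arguments (e.g.\ Gauss's specialization $w=i$ combined with the substitution $q\mapsto q^4$ and a telescoping comparison, or a limit of the finite $q$-binomial identity). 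Alternatively, and more in the spirit of the paper, simply cite the triple product identity as known; then everything else in your write-up stands.
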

\begin{proof}
  The product formula \ref{theta-ii} is proved for example in \cite[Chapter 10,
  Theorem 1.3]{SteinC}. The positivity of $\vartheta_3$ follows by expressing
  the right hand side in the Jacobi product representation (i) of the theta
  function as
  \begin{align*}
    \vartheta_3(\beta; it)=\prod_{r=1}^{\infty} 
    \big( 1-e^{-2\pi rt}\big) \big(\sin^2 (2\pi \bet)) + \big[e^{-(2r-1)\pi t} + \cos (2 \pi \bet)\big]^2 \big) >0.
  \end{align*}
  The periodicity is a direct consequence of formula
  \eqref{def-jacobitheta}. The statement \eqref{bet-12} finally follows from
  \eqref{def-jacobitheta} with the change of variables $q=-k$.
\end{proof}
Finally, we introduce the translated lattice theta function (see
e.g. \cite[Section 2.3]{ConSloanPacking}) which is a particular case of
generalized lattice theta functions that have been studied by Krazer and Prym in
\cite{KrazerPrym}:
\begin{definition}[Translated lattice theta function] %
  \label{def-thetaL} %
  Let $d\geq 1$ and let $X = \bigoplus_{i=1}^d \Z u_i \subset \R^d$ be a Bravais lattice.  Then the theta function of
  the translated lattice $X+z$ or translated lattice theta function is defined by
  \begin{align*}%
    \theta_{X+z}(\alp) \ := \ \sum_{x\in X} e^{-\pi \alp |x+z|^2} &&\text{for any $\alp>0$ $z\in \R^d$.}
  \end{align*}
\end{definition}
The function $\theta_{X+z}(\alp)$ can be understood as follows: Consider a
matrix of points at the lattice points of $X$, carrying a unit charge. Suppose
that the charges induce an interaction potential $\exp(-\pi \alp |x|^2)$. Then
$\theta_{X+z}(\alp)$ describes the (Gaussian) interaction energy between $z$ and
$X$.
\begin{remark}[Translated lattice theta function and heat flow] %
  An interpretation of $\theta_{X+z}$ in terms of the heat flow is given by
  Baernstein in \cite{Baernstein-1997}. Let $P_X$ be the temperature at point $z$
  and at time $t$, if at time $t=0$ a heat source of unit strength is placed at
  each point of $X$, i.e. $P_X$ is defined, for any $z\in \R^d$, any Bravais
  lattice $X\subset \R^d$ and any $t>0$ as the solution of
\begin{align*}
  \left \{ %
  \begin{array}{ll}
    \partial_t P_X(z,t) = \Delta_z P_X \qqqquad &\text{for  $(z,t) \in \R^d \times (0,\infty)$} \vspace{0.3ex} \\
    P_X(z,0) = \sum_{p \in X} \delta_{p} \qqqquad &\text{for $z \in \R^d$},
  \end{array}
  \right.
\end{align*}
where $\delta_p$ is the Dirac measure at $p \in \R^d$. Then
\begin{align*}
  \theta_{X+z}(\alp) = \frac{1}{\alpha^{\frac{d}{2}}} P_X\left(z,\frac 1{4\pi \alp}\right) &&\text{for $z \in \R^d$, $\alp > 0$.}
\end{align*}
\end{remark}
We next recall some basic facts related to the translated lattice theta function, introduced in Definition \ref{def-thetaL}.  We
first recall Jacobi's Transformation Formula:
\begin{proposition}[Jacobi's Transformation Formula]  %
  For any Bravais lattice $X\subset \R^d$ of density one, any $\alp>0$ and any
  $z\in \R^d$, we have
  \begin{align}\label{prp-jacobi}
    \theta_{X+z}(\alp) = \sum_{x\in X} e^{-\alp\pi |x+z|^2}=\frac{1}{\alp^{\frac d2}}\sum_{p\in X^*} e^{2\pi i p\cdot z}e^{-\frac{\pi |p|^2}\alp}.
  \end{align}
\end{proposition}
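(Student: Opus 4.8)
The plan is to derive \eqref{prp-jacobi} from the Poisson summation formula applied to the Gaussian $g(x) = e^{-\pi \alp |x|^2}$, which is a Schwartz function, so that Poisson summation applies without any convergence subtleties. Recall that for a lattice $X \subset \R^d$ with covolume $|\det A_X|$ and dual lattice $X^*$, and for any Schwartz function $g$, the translated form of the Poisson summation formula reads
\begin{align*}
  \sum_{x \in X} g(x+z) = \frac{1}{|\det A_X|} \sum_{p \in X^*} \widehat g(p)\, e^{2\pi i p \cdot z},
\end{align*}
where $\widehat g(\xi) = \int_{\R^d} g(w) e^{-2\pi i w \cdot \xi} \dd w$. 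Since $X$ is assumed to have unit density, its fundamental cell $Q$ has unit volume, hence $|\det A_X| = 1$ and the prefactor disappears; this normalization is the only place where the density assumption enters.

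The remaining input is the Fourier transform of the Gaussian. I would reduce to one dimension using the product structure $g(x) = \prod_{j=1}^d e^{-\pi \alp x_j^2}$, which factorizes $\widehat g$ over coordinates. A scaling substitution $u = \sqrt{\alp}\, x$ in the classical identity $\int_\R e^{-\pi x^2} e^{-2\pi i x \xi} \dd x = e^{-\pi \xi^2}$ gives $\int_\R e^{-\pi \alp x^2} e^{-2\pi i x \xi} \dd x = \alp^{-1/2} e^{-\pi \xi^2/\alp}$, and taking the product over the $d$ coordinates yields $\widehat g(\xi) = \alp^{-d/2} e^{-\pi |\xi|^2/\alp}$. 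Substituting $\xi = p$ into the Poisson summation formula above and using $|\det A_X| = 1$ reproduces exactly the right-hand side of \eqref{prp-jacobi}.

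If one prefers an argument that avoids quoting Poisson summation as a black box, I would instead observe that $z \mapsto \theta_{X+z}(\alp)$ is $X$-periodic and expand it in a Fourier series indexed by $X^*$. The Fourier coefficient at $p \in X^*$ is $\int_Q \sum_{x \in X} e^{-\pi \alp |x+z|^2} e^{-2\pi i p \cdot z} \dd z$, and since $p \cdot x \in \Z$ one has $e^{-2\pi i p \cdot z} = e^{-2\pi i p \cdot (x+z)}$, so unfolding the sum over the lattice into a single integral over $\R^d = \bigcup_{x \in X}(Q+x)$ recovers $\widehat g(p)$. The only point requiring care is the interchange of summation and integration in this unfolding, which is justified by the rapid decay of the Gaussian through Fubini's theorem; beyond this there is no genuine obstacle, since the entire content of the statement resides in the Gaussian being its own Fourier transform up to the scaling $\alp \leftrightarrow 1/\alp$.
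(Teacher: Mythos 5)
Your proof is correct and follows the same route the paper takes: the paper's own ``proof'' is a citation to a Poisson-summation-based argument, and you simply supply the standard details (unit covolume killing the prefactor, and the Gaussian's self-duality under $\alp \leftrightarrow 1/\alp$), all of which check out. The Fourier-series unfolding you sketch as an alternative is just Poisson summation proved by hand for this periodization, so it is the same argument in different clothing.
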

\begin{proof}
  A proof (based on Poisson's summation formula) can be found e.g. in
  \cite[Theorem A]{SaffLongRange}. See also \cite{Bochnertheta} for a proof of a
  more general formula.
\end{proof}
For one-dimensional lattices the translated lattice theta function can be
expressed in terms of Jacobi theta function. Furthermore, we state other useful
identities related to scaling and periodicity of the translated lattice theta
function:
\begin{lemma} \label{lem-thetaL} %
  Let $d \geq 1$ and let $X \SUS \R^d$ be a Bravais lattice. Let
  $\alp > 0, \bet \in \R$, $z \in \R^d$ and let $\lam \neq 0$. Then
  \begin{enumerate}
  \item
    $\displaystyle \theta_{\Z+\beta}(\alp) = \frac{1}{\sqrt{\alp}} \vartheta_3
    (\beta;i\alpha^{-1})$ \ for $X = \Z$.
  \item
    $\displaystyle \theta_{\lam X + z}(\alp) = \theta_{X+\frac{z}{\lambda}}(\alp
    \lambda^2)$,
  \item the map $z \mapsto \theta_{X+z}(\alp)$ is periodic w.r.t.  the unit
      cell $Q = \sum_{i=1}^d [0,1)u_i$.
  \end{enumerate}
\end{lemma}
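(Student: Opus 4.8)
The plan is to establish the three identities separately; each reduces to a one-line manipulation of the defining series from Definition \ref{def-thetaL}, and the only external ingredient needed is Jacobi's Transformation Formula \eqref{prp-jacobi}, which I would invoke for part (i).

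For (i) I would specialize to $d=1$, $X=\Z$. This lattice has density one and is self-dual, $X^*=\Z$, so \eqref{prp-jacobi} with $z=\bet$ gives
\[
  \theta_{\Z+\bet}(\alp) \;=\; \frac{1}{\sqrt{\alp}}\sum_{p\in\Z} e^{2\pi i p\bet}\,e^{-\pi p^2/\alp}.
\]
I would then expand the definition \eqref{def-jacobitheta} of $\vartheta_3$ at the point $z=i\alp^{-1}$ (admissible since $\Im(i\alp^{-1})=\alp^{-1}>0$), obtaining $\vartheta_3(\bet;i\alp^{-1})=\sum_{k\in\Z} e^{-\pi k^2/\alp+2\pi i k\bet}$, which coincides termwise with the sum above after renaming $p=k$. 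This yields (i).

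For (ii) and (iii) no transformation formula is needed, and I would argue directly from the definition. For (ii), the points of $\lam X+z$ are precisely $\lam x+z$ as $x$ runs over $X$, and the elementary identity $|\lam x+z|^2=\lam^2|x+z/\lam|^2$ (valid for $\lam\neq 0$) turns the exponent $-\pi\alp|\lam x+z|^2$ into $-\pi(\alp\lam^2)|x+z/\lam|^2$; summing over $x\in X$ gives $\theta_{\lam X+z}(\alp)=\theta_{X+z/\lam}(\alp\lam^2)$. For (iii), it suffices to prove invariance of $z\mapsto\theta_{X+z}(\alp)$ under translation by each generator $u_i$, since $Q$ is a fundamental domain for $X=\bigoplus_{i=1}^d\Z u_i$; because $u_i\in X$, the shift $x\mapsto x+u_i$ is a bijection of $X$, so reindexing the series immediately gives $\theta_{X+z+u_i}(\alp)=\theta_{X+z}(\alp)$.

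Since every step is a reindexing or a completing-the-square manipulation, I do not expect a genuine obstacle. The only point demanding care is the bookkeeping in (i): one must check that the prefactor $\alp^{-d/2}$ of \eqref{prp-jacobi} reduces to $\alp^{-1/2}$ for $d=1$, and that the translation $\bet$ lands in the first argument $\xi=\bet$ of $\vartheta_3$ rather than being absorbed into its modular variable. Both are verified once the two series are written out explicitly, as above.
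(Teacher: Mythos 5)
Your proposal is correct and follows essentially the same route as the paper: part (i) is Jacobi's Transformation Formula \eqref{prp-jacobi} specialized to the self-dual lattice $\Z$ and matched termwise against \eqref{def-jacobitheta}, part (ii) is the factorization $|\lam x+z|^2=\lam^2|x+z/\lam|^2$ followed by reindexing, and part (iii) is the invariance of the series under the lattice translations $x\mapsto x+u_i$. The paper merely phrases (i) through the substitution $t=\alp^{-1}$, which is the same computation read in the opposite direction.
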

\begin{proof}
  In terms of $t := \frac 1\alp$, identity (i) is equivalent to
  \begin{align} \label{i-eq} \vartheta_3 (\beta;it)=\frac{1}{\sqrt{t}}\sum_{n\in
      \Z}e^{-\frac{\pi(n+\beta)^2}t}.
  \end{align}  
  In turn \eqref{i-eq} is a direct application of Jacobi's transformation
  formula \eqref{prp-jacobi} for $X=\Z$. The identity (ii) is easily
  obtained by
  \begin{align*}
    \theta_{\lam X + z}(\alp) =\sum_{x\in X} e^{-\pi \alp |\lam x + z|^2}=\sum_{x\in X} e^{-\pi \alp\lambda^2|x+\frac z\lambda|^2}=\theta_{X+\frac z {\lambda}}(\alp\lambda^2).
  \end{align*}
  Finally, the statement (iii) follows from the periodicity of $X$.
\end{proof}
Consequently, in view of (iii) for any fixed lattice $X$ and for given
$\alp>0$, the problem of minimizing $z\mapsto \theta_{X+z}(\alp)$ can be
restricted to $Q=\sum_{i=1}^d [0,1)u_i$.

\medskip

\begin{lemma}[Symmetry of the theta function]\label{lem-thetasymmetry} Let $X=\bigoplus_{i=1}^d \Z u_i$ be a Bravais lattice and $T(z)=\sum_{i=1}^d u_i -z$ be the symmetry with respect to the center $c=\frac{1}{2}\sum_{i=1}^d u_i$ of the primitive cell $Q=\sum_{i=1}^d [0,1)u_i$. Then, for any $z\in \R^d$ and any $\alpha>0$, we have
  \begin{align*}
    \theta_{X+z}(\alpha)=\theta_{X+T(z)}(\alpha)
  \end{align*}
\end{lemma}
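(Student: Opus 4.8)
The plan is to reduce the claimed symmetry to two elementary invariances of the defining sum. First I would record that $T(z) = u - z$, where $u := \sum_{i=1}^d u_i$ is itself a point of $X$, since all of its coefficients in the basis $(u_i)$ equal $1 \in \Z$. Thus $T$ is the composition of the central reflection $z \mapsto -z$ with the translation by the lattice vector $u$, and the whole statement should follow from the two facts that $X$ is an additive group (hence invariant under $x \mapsto x+u$) and that $X = -X$ (hence invariant under $x \mapsto -x$).

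Concretely, I would start from the definition
\[
  \theta_{X+T(z)}(\alpha) = \sum_{x\in X} e^{-\pi\alpha |x + u - z|^2}
\]
and perform the substitution $y = x + u$. Since $u \in X$, the map $x \mapsto x+u$ is a bijection of $X$ onto itself, so the sum becomes $\sum_{y\in X} e^{-\pi\alpha |y - z|^2}$. Next I would substitute $w = -y$; because $X$ is symmetric about the origin, $y \mapsto -y$ is again a bijection of $X$, and $|y-z|^2 = |{-w}-z|^2 = |w+z|^2$. This turns the sum into $\sum_{w\in X} e^{-\pi\alpha |w+z|^2} = \theta_{X+z}(\alpha)$, which is precisely the assertion.

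I do not expect any genuine obstacle here: the argument is a pure reindexing of an absolutely convergent sum, and the only point to check is that the two reindexings $x\mapsto x+u$ and $y\mapsto -y$ are bijections of $X$, which holds because $X$ is a lattice, i.e. an additive subgroup of $\R^d$ closed under negation. The geometric content is simply that the center $c = \frac12\sum_{i=1}^d u_i$ of the primitive cell $Q$ is a center of symmetry of $X$: reflecting $X$ through $c$ returns $X$, and consequently the reflection of the translate $X+z$ through $c$ is $X + T(z)$. Since the Gaussian weights depend only on the distances $|{\cdot}|$, which are preserved by this reflection, the energy is left unchanged.
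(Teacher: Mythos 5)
Your argument is correct and coincides with the paper's own proof: the paper likewise removes the lattice vector $\sum_{i=1}^d u_i$ by periodicity of the translated theta function (Lemma \ref{lem-thetaL}(iii), i.e.\ your reindexing $x\mapsto x+u$) and then uses the symmetry $-X=X$ to pass from $|x-z|^2$ to $|x+z|^2$. No differences worth noting.
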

\begin{proof}
  We have, for any $z\in \R^d$ and any $\alpha>0$,
  \begin{align*}
    \theta_{X+T(z)}(\alpha)=\sum_{x\in X} e^{-\pi \alpha |x+T(z)|^2}=\sum_{x\in X} e^{-\pi \alpha |x+\sum_{i=1}^d u_i-z|^2}=\sum_{x\in X} e^{-\pi \alpha |x-z|^2}=\theta_{X+z}(\alpha),
  \end{align*}
  by the periodicity of the translated theta function (Lemma
  \ref{lem-thetaL}(iii)) and the symmetry $-X=X$.
\end{proof}

\section{Statement of main results} \label{sec-main} %

We consider minimizer of the interaction energy among periodic charge configurations.  As in the paper by Born
\cite{Born-1921}, we assume that the charges of the points of $K_N$ satisfy a constraint on the total charge per
periodicity cell, i.e.
\begin{align}\label{totalcharge} %
  \frac 1{N^d} \sum_{y\in K_N} \phi_y^2 \ = \ 1, && \phi(0) >  0.
\end{align}
Our first result is a statement which connects the minimization of the
energy for certain lattice energies with the minimization problem among vectors
for the translated lattice theta function:
\begin{theorem}[Optimal charge distributions and theta function] \label{thm-1} %
  Let $d\geq 1$, let $X=\bigoplus_{i=1}^d \Z u_i \subset \R^d$ be a Bravais
  lattice and let $f \in \FF$.  Suppose that
    \begin{align}\label{prop-minz0}
z_0 \ \in  \ \argmin \Big\{  \theta_{X^*+z}(\alp) : z \in \sum_{i=1}^d \lam_i u_i^* : \forall i,\lam_i \in [0,1) \Big \} %
      &&\forall \alp > 0,
  \end{align}
  i.e. $z_0$ is an absolute minimizer of the translated theta function
  associated to the dual lattice $X^*$ for all $\alp > 0$. Furthermore, suppose
  that $z_0\in \frac{1}{N}X^*$ for some $N\in \N$. Then the energy
  $\mathcal{E}_{X,f}$ is minimized among all periodic charge configurations
  $\phi$ which satisfy \eqref{totalcharge} (and which satisfy \eqref{neutral} if
  $f$ is nonsummable) by
  \begin{align}\label{generalmin}
    \phi^*(x) =c \cos \left(2\pi x\cdot z_0 \right) &&
                                                       \text{for 
                                                       $x \in X$},
  \end{align}
  where the value of the constant $c$ is determined by
  \eqref{totalcharge}.

  Furthermore, if \eqref{prop-minz0} has at most two solutions, then the charge
  configuration \eqref{generalmin} is the unique minimizer of $\EE_{X,f}$, up to
  symmetries keeping $X$ invariant. Otherwise, if \eqref{prop-minz0} has more
  than two solutions then there are infinitely many minimizing charge
  configurations which are pairwise not related by symmetries keeping $X$
  invariant.
\end{theorem}
We notice that $z_0$ defined by \eqref{prop-minz0} is assumed to be
  a minimizer of $z\mapsto \theta_{X^*+z}(\alpha)$ for any $\alpha>0$. As we
  will see, that is the case if $X^*$ is an orthorhombic or triangular lattice,
  but it is not clear for which lattices this property remains true. In
  \cite[Thm 1.5]{BeterminPetrache}, it has been shown that a deep hole of $X$,
  i.e. a solution $c$ to $\max_{c\in \R^d}\min_{p\in X^*} |c-p|$ is an
  asymptotic minimizer for $z\mapsto \theta_{X^*+z}(\alpha)$ as
  $\alpha\to +\infty$. Furthermore, in \cite[Thm 1.6]{BeterminPetrache}, it has
  been proved that the asymptotic minimizers vary according to the structure of
  the lattice in dimension $2$, as it was also briefly explained in
  \cite[p. 232]{Baernstein-1997}. Therefore, it appears that lattices $X$ with
  sufficient asymmetry do not have the same asymptotic minimizer for any
  $\alpha>0$. In dimension $2$, according to these results, rectangular and
  rhombic lattices (the two generating vectors of $X$ have the same length and
  the minimal interior angle of the unit cell belongs to $[\pi/3,\pi/2]$) as
  well as non-rhombic lattices such that the second layer (from the origin) of
  $X$ has cardinality $2$ or $6$ seem to be the good candidates for keeping the
  same minimizers for any $\alpha>0$. We are not aware of a proof of such
  property.  

\medskip

The explicit form of the minimizers in \eqref{generalmin} implies in particular that the net charge for any minimizing configuration must be zero.
We have the following corollary:
\begin{corollary}[Charge neutrality] \label{cor-neutral} %
  Suppose that $f \in \ell^1(X \BS \{ 0 \})$. Then for any minimizer $\phi^*$ of
  $\EE_{X,f}$, the total net charge of the configuration is zero,
  \begin{align*}
    \sum_{y\in K_N} \phi^*(y)=0.
  \end{align*}
\end{corollary}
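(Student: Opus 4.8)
The plan is to read off charge neutrality directly from the explicit form of the minimizer established in Theorem~\ref{thm1}. Since we assume $f \in \ell^1(X \BS \{0\})$, the energy is already defined by the absolutely convergent sum without invoking the neutrality condition \eqref{neutral}, so there is no circularity: the conclusion $\sum_{y \in K_N} \phi^*(y) = 0$ is genuine output, not a hypothesis. By Theorem~\ref{thm1}, any minimizer $\phi^*$ has the form
\begin{align*}
  \phi^*(x) = c \cos\left(2\pi x \cdot z_0\right) \qquad \text{for } x \in X,
\end{align*}
where $z_0$ is the minimizer of the translated theta function associated to $X^*$, and $z_0 \in \tfrac{1}{N} X^*$. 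The task reduces to showing that the sum of this cosine over the finite sublattice $K_N$ vanishes.

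The key step is to exploit the symmetry of the theta minimizer. By Lemma~\ref{lem-thetasymmetry}, the minimizer set of $z \mapsto \theta_{X^*+z}(\alp)$ is invariant under the symmetry $T(z) = \sum_{i=1}^d u_i^* - z$ about the center of the primitive cell of $X^*$; in particular $z_0$ and $T(z_0)$ are both minimizers. First I would treat the generic case where $z_0$ is the center itself, i.e. $z_0 = \tfrac12 \sum_{i=1}^d u_i^*$, which is the situation relevant to the orthorhombic applications. In that case $2 z_0 \in X^*$, so $x \cdot (2 z_0) \in \Z$ for all $x \in X$, and $\cos(2\pi x \cdot z_0) = \pm 1$ takes the values $(-1)^{m_1 + \cdots + m_d}$ on $x = \sum_i m_i u_i$. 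Summing over $K_N$ with $N$ even gives a perfect cancellation between sites with opposite parity, yielding $\sum_{y \in K_N}\phi^*(y) = 0$. More robustly, and to cover general $z_0 \in \tfrac1N X^*$, I would pass to the Fourier side: the net charge is, up to normalization, the value $\widehat{\phi^*}(0)$ of the discrete Fourier transform at the zero frequency. Writing $\cos(2\pi x \cdot z_0) = \tfrac12(e^{2\pi i x\cdot z_0} + e^{-2\pi i x\cdot z_0})$ and using that $z_0 \in \tfrac1N X^*$ corresponds to a nonzero frequency $k = N z_0 \in X^* \setminus \{0\}$ (together with its negative $-k$), the sum $\sum_{y \in K_N} e^{\pm 2\pi i y \cdot z_0}$ is a geometric-type sum over the orthonormal characters $e^{(k)}$ that vanishes by orthogonality, precisely because $z_0 \neq 0$ modulo $X^*$.

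The one genuine subtlety — and the step I expect to require the most care — is verifying that $z_0 \ne 0$ in $\R^d / X^*$, so that the frequency $k = N z_0$ is indeed a nonzero character and orthogonality applies. The point $z_0 = 0$ would correspond to a constant, non-neutral configuration; one must rule this out by noting that $z = 0$ is a lattice point of $X^*$ and hence a \emph{maximum} rather than a minimum of the smooth, $X^*$-periodic function $z \mapsto \theta_{X^*+z}(\alp)$ (the summand $e^{-\pi\alp|z|^2}$ is largest there, and more generally the Poisson-summed representation \eqref{prp-jacobi} shows all Fourier coefficients are positive so the maximum is at the origin). Thus $z_0 \notin X^*$, the associated frequency is nonzero, and the orthogonality of the characters $e^{(k)}$ forces the net charge to vanish. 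Assembling these observations completes the proof.
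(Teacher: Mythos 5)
Your mechanism is the right one and is essentially the paper's: both arguments reduce to the fact that the optimal spectral weight of a minimizer avoids the zero frequency, because the minimizer $z_0$ of $z\mapsto\theta_{X^*+z}(\alp)$ is never a point of $X^*$. Your justification of that last fact (Jacobi's formula \eqref{prp-jacobi} shows that all Fourier coefficients of $z\mapsto\theta_{X^*+z}(\alp)$ are positive, so the points of $X^*$ are strict maxima of a non-constant function and cannot be minima) is correct, and is in fact spelled out more carefully than in the paper, which only remarks in the proof of Proposition \ref{prp-thm-11} that $E[k]$ cannot be minimized for $k\in NX^*$ since the energy would decrease by flipping one charge. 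Your character-orthogonality computation of $\sum_{y\in K_N}e^{\pm 2\pi i y\cdot z_0}$ for $Nz_0\in X^*\BS NX^*$ is also fine.

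The one genuine gap is your opening step: Theorem \ref{thm1} does not assert that \emph{every} minimizer has the form $c\cos(2\pi x\cdot z_0)$. It asserts that this configuration \emph{is} a minimizer, and that it is the unique one (up to symmetries) only when \eqref{prop-minz0} has at most two solutions; in the degenerate case the theorem explicitly exhibits infinitely many minimizers that are not pairwise symmetry-related, and these are superpositions of several nonzero frequencies rather than single cosines. So the quantifier ``for any minimizer $\phi^*$'' in the corollary is not covered by your argument as written. The repair is to run your Fourier-side reasoning one level up, on the power spectral density $\xi^*$ of \eqref{def-xi} instead of on $\phi^*$ itself: the proof of Theorem \ref{thm1} shows that for any minimizer, $\xi^*$ is supported on the set of those $k$ for which $k/N$ minimizes the translated theta function, and by your own observation this set excludes $k\equiv 0 \bmod NX^*$; hence $\xi^*_0=0$, and by Lemma \ref{lem-xi} one has $\xi^*_0=N^{-d}\big(\sum_{y\in K_N}\phi^*_y\big)^2$, which forces the net charge to vanish. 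This is precisely the paper's proof, and it subsumes your explicit summation over $K_N$.
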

We note that charge neutrality is assumed if $f$ is not absolutely summable. The
corollary shows that charge neutrality also holds in the absolutely summable
case.

\medskip

Another interesting question is if there are lattices where there is no periodic
optimal charge configuration. Here, Theorem \ref{thm-1} could be useful to prove
such a non-existence result:
\begin{remark}[Non-existence] %
  From the proof of Theorem \ref{thm-1}, the following statement can be easily
  deduced: Let $d\geq 1$. Let $X=\bigoplus_{i=1}^d \Z u_i \subset \R^d$ be a
  Bravais lattice with dual lattice $X^*$.  Suppose that $z_0$ is an absolute
  minimizer of the translated theta function for all $\alp > 0$, i.e. $z_0$
  satisfies \eqref{prop-minz0}. Furthermore, suppose that
  \begin{align*}
    \argmin \Big\{  \theta_{X^*+z}(\alp) : z \in \R^d \Big\} \cap \left\{ \frac 1N X^* : N \in \N \right\} = \emptyset.
  \end{align*}
  Then the absolute minimum of the energy $\mathcal{E}_{X,f}$ is not obtained
  within the class of periodic charge configurations.
\end{remark}

\medskip

Together with a result obtained in \cite[Cor. 3.17]{BeterminPetrache} about
minimization of the translated lattice theta function for the orthorhombic lattice, we
obtain the optimal charge distribution for orthorhombic (and in particular
cubic) lattices. Indeed, for any orthorhombic lattice $X=\bigoplus_{i=1}^d \Z (b_i e_i) \subset \R^d$, the unique minimizer of $z\mapsto \theta_{X+z}(\alpha)$ is, for any $\alpha>0$, the center of the unit cell $z_0=\frac{1}{2}\left(b_1,...,b_d \right)$. We then get the following result:
\begin{theorem}[Optimal charge distribution for orthorhombic
  lattices] \label{thm-2} %
  Let $d\geq 1$, let $f\in \FF$ and let
  $X=\bigoplus_{i=1}^d \Z (a_i e_i) \subset \R^d$ be an orthorhombic lattice
  with $a_i > 0$ for $i \in \digint{1}{d}$.  Then the minimizer
  $\phi^*_{\rm orth} : X \to \R$ of $\EE_{X,f}$ among all periodic charge
  distributions $\phi : X \to \R$ satisfying the normalization constraint
  \eqref{totalcharge}, and the charge neutrality constraint \eqref{neutral} if $f$
  is nonsummable, is given by
  \begin{align*}
    \phi^*_{\rm orth}\Big(\sum_{i=1}^d m_i a_i e_i \Big)  =(-1)^{\sum_{i=1}^d m_i} &&\text{for all $m_i \in \N$, $i \in \digint{1}{d}$}.
  \end{align*}
 The minimizer is unique up to translations by a vector in $X$.
\end{theorem}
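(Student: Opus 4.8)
The plan is to deduce Theorem \ref{thm-2} directly from Theorem \ref{thm1}, reducing everything to the theta-minimization problem on the dual lattice. First I would identify the dual of the orthorhombic lattice: writing $u_i = a_i e_i$, the defining relation $p\cdot x \in \Z$ for all $x \in X$ gives
\begin{align*}
  X^* = \bigoplus_{i=1}^d \Z\, u_i^*, \qquad u_i^* = \tfrac{1}{a_i}\, e_i,
\end{align*}
so that $X^*$ is again orthorhombic. The genuinely analytic input — the minimization of $z\mapsto \theta_{X^*+z}(\alp)$ — is exactly the content of \cite[Cor.~3.17]{BeterminPetrache}, which I would invoke to conclude that, for every $\alp>0$, the unique minimizer over the unit cell of $X^*$ is the center
\begin{align*}
  z_0 = \tfrac{1}{2}\sum_{i=1}^d u_i^* = \tfrac{1}{2}\sum_{i=1}^d \tfrac{1}{a_i}\, e_i .
\end{align*}
In particular $z_0$ is independent of $\alp$ and thus satisfies \eqref{prop-minz0}, as required by Theorem \ref{thm1}.

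Next I would verify the periodicity hypothesis and compute the resulting configuration explicitly. Since $z_0 \in \tfrac12 X^*$, the condition $z_0 \in \tfrac1N X^*$ holds with $N=2$, so Theorem \ref{thm1} applies and produces the minimizer $\phi^*(x) = c\cos(2\pi\, x\cdot z_0)$. Evaluating at $x = \sum_{i=1}^d m_i a_i e_i$ gives $x\cdot z_0 = \tfrac12\sum_{i=1}^d m_i$, whence
\begin{align*}
  \phi^*(x) = c\cos\Big(\pi \sum_{i=1}^d m_i\Big) = c\,(-1)^{\sum_{i=1}^d m_i}.
\end{align*}
Because $|\phi^*(x)| = |c|$ is constant, the normalization \eqref{totalcharge} forces $|c|=1$, and the sign constraint $\phi(0)>0$ fixes $c=1$; this is precisely the rock-salt alternate distribution $\phi^*_{\rm orth}$.

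For uniqueness I would use that the theta minimizer just quoted is unique in the unit cell. By Lemma \ref{lem-thetasymmetry} the central symmetry $T$ fixes $z_0$, so \eqref{prop-minz0} has a single solution, which falls under the ``at most two solutions'' clause of Theorem \ref{thm1}; hence the minimizing charge configuration is unique up to symmetries keeping $X$ invariant. It then remains to translate this into the stated uniqueness ``up to translations by a vector in $X$'': the point symmetries and axis permutations of the orthorhombic lattice leave $(-1)^{\sum_i m_i}$ invariant, while a lattice translation by $v = \sum_i v_i a_i e_i$ shifts each index $m_i$ by $v_i$ and therefore changes $\phi^*$ only by the global factor $(-1)^{\sum_i v_i}$, which the constraint $\phi(0)>0$ pins down.

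The main obstacle is already absorbed into the cited theta result: proving that the center minimizes the translated lattice theta function for \emph{all} $\alp$ is the hard step, and I am treating it as a black box via \cite{BeterminPetrache}. What is left is essentially bookkeeping — correctly identifying $X^*$, checking the divisibility $z_0 \in \tfrac12 X^*$, and reconciling the abstract symmetry-uniqueness of Theorem \ref{thm1} with the concrete ``up to translation'' statement, where one must track carefully how global sign flips interact with the normalization $\phi(0)>0$.
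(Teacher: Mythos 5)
Your proposal is correct and follows essentially the same route as the paper: identify $X^*$ as orthorhombic, invoke the theta-minimization result of \cite{BeterminPetrache} to get the center $z_0=\frac12\sum_i u_i^*$ as unique minimizer for all $\alp>0$, check $z_0\in\frac12 X^*$, and apply Theorem \ref{thm1} to reconstruct the alternating configuration and its uniqueness. The only difference is presentational: the paper restates the cited theta result as Proposition \ref{minsquare} and reproves it via the factorization $\theta_{X+z}=\prod_i\theta_{a_i\Z+z_i}$ and the Jacobi product formula, whereas you use it as a black box, which is legitimate.
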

Theorem \ref{thm-2} applies in particular for the case when $f$ is the Coulomb
potential and $d=3$. Hence, Theorem \ref{thm-2} proves and generalizes Born's
conjecture about optimal charge distributions in \cite{Born-1921} for cubic
lattices and more general orthorhombic lattices. Moreover, we find again Born's
result \cite[Section 4]{Born-1921} in dimension $d=1$.

\medskip

From the proof of Theorem \ref{thm-2}, the following result for general lattices
can be deduced:
\begin{corollary}[The case of alternating charges] \label{cor-2} %
  Let $d\geq 1$, let $f\in \FF$ and let
  $X=\bigoplus_{i=1}^d \Z u_i \subset \R^d$ be a Bravais lattice. Furthermore,
  suppose that the equation \eqref{prop-minz0} only has a unique solution
  $z_0 \in \sum_{i=1}^d [0,1) u_i^*$. Then the unique, up to translations by a
  vector of $X$, periodic charge configuration $\phi$ which minimizes the energy
  $\EE_{X,f}$ is the alternating configuration
  \begin{align*}
    \phi^*_{\rm orth}\Big(\sum_{i=1}^d m_i u_i \Big)  =(-1)^{\sum_{i=1}^d m_i} &&\text{for all $m_i \in \N$, $i \in \digint{1}{d}$}.
  \end{align*}
\end{corollary}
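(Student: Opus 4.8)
\section*{Proof proposal for Corollary \ref{cor-2}}

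The plan is to read the corollary as the case of Theorem \ref{thm1} in which the dual translated theta function has a single minimizing vector, and then to identify that vector as the center of the dual cell. A unique solution is in particular "at most two" solutions, so Theorem \ref{thm1} applies verbatim: the minimizer of $\EE_{X,f}$ among all configurations normalized by \eqref{totalcharge} (and, if $f\nin\ell^1(X\BS\{0\})$, satisfying \eqref{neutral}) is $\phi^*(x)=c\cos(2\pi x\cdot z_0)$, unique up to symmetries keeping $X$ invariant. It therefore remains to (i) locate $z_0$, (ii) evaluate the cosine, and (iii) fix the constant and translate the uniqueness clause into the asserted form. The hypothesis $z_0\in\frac1N X^*$ of Theorem \ref{thm1} will fall out of step (i).

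The decisive step is (i), namely $z_0=\frac12\sum_{i=1}^d u_i^*$, the center of the dual primitive cell. Here I would invoke Lemma \ref{lem-thetasymmetry} for the dual lattice $X^*$ with basis $(u_i^*)$: the affine involution $T(z)=\sum_{i=1}^d u_i^*-z$, reflection through that center, leaves $z\mapsto\theta_{X^*+z}(\alp)$ invariant for every $\alp>0$, hence maps the solution set of \eqref{prop-minz0} into itself. Writing $z_0=\sum_i\lam_i u_i^*$ with $\lam_i\in[0,1)$ and using that this set is a single point of the cell, uniqueness gives $T(z_0)\equiv z_0$ modulo $X^*$, which coordinatewise forces $\lam_i=\tfrac12$ for every $i$ with $\lam_i\neq 0$. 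In particular $2z_0\in X^*$, which both secures $z_0\in\frac12 X^*$ (so Theorem \ref{thm1} applies with $N=2$) and reduces the claim to excluding vanishing coordinates.

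The main obstacle is exactly this exclusion of vanishing coordinates. On the half-open cell, reducing $T(z_0)$ modulo $X^*$ fixes every half-period point $\tfrac12\sum_{i\in S}u_i^*$, so the reflection alone does not separate the center from the other $2^d-1$ such points; a coordinate $\lam_j=0$ would instead produce the "stripe" configuration $\phi^*(x)=c(-1)^{\sum_{i\in S}m_i}$, constant along the $j$-th lattice direction rather than alternating. To exclude this I would restrict $\theta_{X^*+z}(\alp)$ to the line $t\mapsto z_0+tu_j^*$; since $2z_0\in X^*$ this restriction is even and $1$-periodic in $t$, so $t=0$ and $t=\tfrac12$ are its only candidate critical points, and striping versus alternating amounts to deciding that its minimum sits at $t=\tfrac12$. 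Proving this strict inequality for every $\alp$ — equivalently, that full alternation is strictly preferred to striping — is the real content of the corollary and the step I expect to be hardest; I would attack it through Jacobi's transformation formula (Proposition \ref{prp-jacobi}), which rewrites the difference as a signed Gaussian sum over $X$ to be controlled via the completely monotone structure underlying $\FF$. This subtlety is precisely why the statement is confined to the unique-minimizer case, and in the orthorhombic situation of Theorem \ref{thm-2} it is bypassed by the explicit minimizer supplied by \cite{BeterminPetrache}.

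With $z_0=\frac12\sum_{i=1}^d u_i^*$ the remaining steps are routine. Using $u_i\cdot u_j^*=\delta_{ij}$, for $x=\sum_i m_i u_i$ one gets $x\cdot z_0=\tfrac12\sum_i m_i$, so \eqref{generalmin} reads $\phi^*(x)=c\cos(\pi\sum_i m_i)=c(-1)^{\sum_i m_i}$, a $2$-periodic configuration. The normalization \eqref{totalcharge} with $N=2$ gives $c^2=1$, and $\phi(0)>0$ selects $c=1$, producing the alternating pattern. Finally, Theorem \ref{thm1} yields uniqueness up to symmetries of $X$; for the alternating pattern translation by a single $u_k$ sends $\phi^*$ to $-\phi^*$ while translation by $u_k+u_\ell$ acts trivially, so the distinct minimizers are exactly the $X$-translates of $\phi^*$, which is the asserted uniqueness up to translation by a vector of $X$.
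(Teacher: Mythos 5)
Your route coincides with the paper's: apply Theorem \ref{thm1} (a unique solution is in particular ``at most two''), use Lemma \ref{lem-thetasymmetry} for the dual lattice to deduce $T(z_0)\equiv z_0 \pmod{X^*}$ from uniqueness of the minimizer, identify $z_0$ with the centre $\tfrac12\sum_{i=1}^d u_i^*$, and conclude exactly as in the proof of Theorem \ref{thm-2}. Your endgame is correct and complete: $x\cdot z_0=\tfrac12\sum_i m_i$ gives the alternating sign, \eqref{totalcharge} together with $\phi(0)>0$ fixes $c=1$, and $z_0\in\tfrac12 X^*$ supplies the hypothesis $z_0\in\tfrac1N X^*$ of Theorem \ref{thm1} with $N=2$.

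The problem is the step you flag yourself and then leave open: $T(z_0)\equiv z_0$ only forces each coordinate $\lam_i\in\{0,\tfrac12\}$, so after discarding $z_0=0$ there remain $2^d-1$ candidate half-period points, and you do not prove that the unique minimizer is the full centre rather than a ``stripe'' point $\tfrac12\sum_{i\in S}u_i^*$ with $S\subsetneq\digint{1}{d}$. Since you explicitly declare this unproven, the proposal is not a complete proof. Two remarks. First, the paper's own argument is no more detailed at this point: it passes from Lemma \ref{lem-thetasymmetry} directly to ``$z_0$ is the centre of the unit cell of $X^*$'', so you have not missed a tool the paper supplies --- you have made explicit a step the paper treats as immediate. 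Second, your proposed repair via ``the completely monotone structure underlying $\FF$'' aims at the wrong object: hypothesis \eqref{prop-minz0} concerns the translated lattice theta function for each fixed $\alp$ and is independent of $f$, so deciding between the half-period points is a statement about $X^*$ alone; after Jacobi's formula the comparison reduces to the sign of a signed Gaussian sum over $X$, which complete monotonicity of $f$ does not control. (In the orthorhombic case of Theorem \ref{thm-2} the issue disappears because the theta function factorizes over the coordinates, which is exactly how Proposition \ref{minsquare} pins down the centre there.)
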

In the general case of arbitrary Bravais lattices, the minimizer of the translated lattice theta function $z\mapsto \theta_{X^*+z}(\alp)$
is not known. However, in the special case of the triangular lattice in two dimensions, this theta function is well
understood -- its minimizers are the two barycenters of the primitive triangles -- and we obtain the optimal solution for the charge distribution problem. Recall that the triangular lattice of
unit density in two dimensions is given by the set
\begin{align} \label{trilattice} %
  \Lambda_1= \Z u_1\oplus \Z u_2, && %
                                     \text{where
                                     $u_1 = \sqrt{\tfrac{2}{\sqrt{3}}}
                                     \VEC{1;0}$, \ \ %
                                     $u_2 = \sqrt{\tfrac{2}{\sqrt{3}}}
                                     \VEC{1/2;\sqrt{3}/2}$.}
\end{align}
The situation here is slightly different than in the orthorhombic case:
\begin{theorem}[Optimal charge distribution for the triangular
  lattice] \label{thm-3} %
  Let $\Lam_1 \subset \R^2$ be the triangular lattice, defined in
  \eqref{trilattice} and let $f\in \FF$. Then the minimizer of
  $\EE_{\Lambda_1,f}$ among all periodic charge distributions
  $\phi : \Lam_1 \to \R $ which satisfy the normalization constraint
  \eqref{totalcharge}, and the charge neutrality constraint \eqref{neutral} if $f$
  is nonsummable, is
  \begin{align*}
    \phi^*_{\rm tri}(m u_1+n u_2) = \sqrt{2} \cos\Big(\frac{2\pi}{3} (m +n ) \Big) &&\text{for $m,n \in \Z$.}
  \end{align*}
  This minimizer is unique, up translations and rotations which keep $\Lam$ invariant.
\end{theorem}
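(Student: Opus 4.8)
The plan is to apply Theorem \ref{thm1} with $X = \Lambda_1$, reducing the problem to the known minimization of the translated lattice theta function on the dual lattice $\Lambda_1^*$. The first step is to observe that $\Lambda_1^*$ is again a triangular lattice of unit density: since the generator matrix $A_{\Lambda_1}$ has determinant one, so does the dual generator matrix $(A_{\Lambda_1}^{-1})^T$, and a lattice obtained from a triangular lattice by rotation and rescaling is itself triangular. Consequently Baernstein's result \cite{Baernstein-1997} applies to $\Lambda_1^*$ and shows that, for every $\alpha > 0$, the function $z \mapsto \theta_{\Lambda_1^* + z}(\alpha)$ is minimized exactly at the two barycenters of the primitive triangles subdividing the rhombic fundamental cell spanned by the dual basis $u_1^*, u_2^*$.

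I would then identify these minimizers explicitly. The rhombus with vertices $0, u_1^*, u_2^*, u_1^* + u_2^*$ splits into two triangles with barycenters $z_0 = \tfrac13(u_1^* + u_2^*)$ and $T(z_0) = \tfrac23(u_1^* + u_2^*)$, where $T$ is the central symmetry of Lemma \ref{lem-thetasymmetry}. In particular there are exactly two minimizers, which puts us in the uniqueness regime of Theorem \ref{thm1}; moreover $z_0 \in \tfrac13\Lambda_1^*$, so the hypothesis $z_0 \in \tfrac1N X^*$ of that theorem holds with $N = 3$.

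It remains to turn the abstract minimizer \eqref{generalmin} into the claimed formula. Using the duality relation $u_i \cdot u_j^* = \delta_{ij}$, one computes for $x = m u_1 + n u_2$ that $x \cdot z_0 = \tfrac13(m+n)$, so Theorem \ref{thm1} yields $\phi^*(x) = c\cos\big(\tfrac{2\pi}{3}(m+n)\big)$. The constant $c$ is determined by \eqref{totalcharge}: as $m+n$ ranges over residues modulo $3$, the value $\cos(\tfrac{2\pi}{3}(m+n))$ takes $1, -\tfrac12, -\tfrac12$, so the average of its square over $K_N$ (with $3 \mid N$) is $\tfrac13(1 + \tfrac14 + \tfrac14) = \tfrac12$, forcing $c = \sqrt2$. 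This reproduces $\phi^*_{\rm tri}$. Finally, uniqueness up to symmetries keeping $\Lambda_1$ invariant follows from the uniqueness clause of Theorem \ref{thm1}, together with the observation that the two minimizers $z_0$ and $T(z_0)$ are interchanged by such a symmetry.

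The genuine mathematical content sits entirely in the input from \cite{Baernstein-1997}: that the barycenters are global minimizers of $z \mapsto \theta_{\Lambda_1^* + z}(\alpha)$ simultaneously for all $\alpha > 0$, and that there are no others. The reduction carried out above is then essentially bookkeeping. The point requiring care is that Baernstein's characterization is sharp enough to guarantee exactly two minimizers, which is precisely what upgrades the conclusion from mere minimality to uniqueness up to symmetry.
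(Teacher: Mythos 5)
Your proposal is correct and follows essentially the same route as the paper: apply Theorem \ref{thm1} with $X=\Lambda_1$, invoke Baernstein's result that the two barycenters $\tfrac13(u_1^*+u_2^*)$ and $\tfrac23(u_1^*+u_2^*)$ are the only minimizers of $z\mapsto\theta_{\Lambda_1^*+z}(\alpha)$ for all $\alpha>0$, check $z_0\in\tfrac13\Lambda_1^*$, and compute $\phi^*(mu_1+nu_2)=c\cos(\tfrac{2\pi}{3}(m+n))$ with $c=\sqrt2$ from the normalization. Your explicit computation of $c$ and the remark that exactly two symmetry-related minimizers yield uniqueness match the paper's argument.
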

We note that by definition the lattice points of the triangular lattice are $1$-periodic. The charges are, however,
not periodic with the same periodicity. Indeed, the minimizer $\phi^*_{\rm tri}$ is $3$-periodic. The optimal
charge distribution for the triangular lattice is sketched in Figure \ref{fig-triangcharged}. One can think of the
lattice as being realized by two different types of particles. The first (depicted in red) carries the charge
$\sqrt{2}$, while the other type particle (depicted in blue) carries the charge $-\sqrt{2}/2$. Since there are twice as
many particles of the second type than of the first type, the configuration is charge neutral.
\begin{figure}[!h]
  \centering
  \includegraphics[width=9cm]{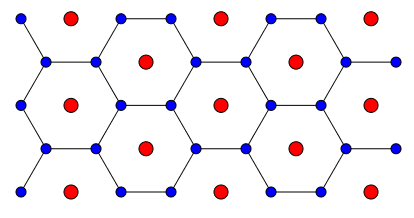}
  \caption{Optimal configuration $\phi^*_{\rm tri}$ in the case of triangular
    lattice}
  \label{fig-triangcharged}
\end{figure}

This honeycomb structure with two kind of particles given by Figure
\ref{fig-triangcharged} appears in different simulations and experiments. In
\cite{AssoudMessinaLowen}, this structure arises as the minimizer of a binary
mixture of particles interacting via the pair potential $V(r)=r^{-3}$ (up to a
multiplicative constant) of parallel dipoles, at zero temperature, when the
ratio of type particles is $(1/3,2/3)$ and at weak dipole-strength
asymmetry. This result explains the experimental finding of \cite[Figure
5.(a)]{Hay:2003aa}. Furthermore, the triangular lattice with the same absolute
values of charges (i.e. $\sqrt{2}$ and $\sqrt{2}/2$) as in Theorem \ref{thm-3}
is numerically identified in \cite{Xiao:1999aa,Levashov:2003aa} as a minimizer
of the Coulomb interaction energy when the particles, with positive charges, are
fixed on a triangular lattice embedded in a negative background charge to ensure
neutrality. However, contrary to our study, these works investigated the
minimizer of the interaction energy when the charges are fixed and as the
concentrations (or the dipole-strength asymmetry) of the species vary.

\medskip

Let us note that the method of Ewald summation in our case yields the same formula for the energy (up to an $s$
dependent constant) as the one obtained by analytic extension of Epstein's zeta function:
 \begin{theorem}[Relation to Epstein's Zeta function] \label{thm-5} %
   Let $d\geq 1$, $N\geq 2$, $X\subset \R^d$ be a Bravais lattice with generator matrix $A_X$, and
   $f_s(x)=|x|^{-s}$, $0< s\leq d$. For any $\phi$ satisfying \eqref{totalcharge}
   and \eqref{neutral}, for the $(\xi_k)_{k\in K_N^*}$ defined in
   \eqref{def-xi}, we have $\xi_k\geq 0$ for any $k$ and
\begin{align*} %
  \EE_{X,f_s}[\phi]=\frac{1}{2N^d}\sum_{k\in K_N^*} \xi_k \textnormal{Z} \VECMOD{0;A_X^{t}\frac{k}{N}} (q_X;s).
\end{align*}
\end{theorem}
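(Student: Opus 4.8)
The plan is to express the Ewald energy $\EE_{X,f_s}[\phi]$ in the Fourier variables and to recognize each Fourier coefficient of the potential as a value of the analytically continued Epstein zeta function. Concretely, I would start from the definition \eqref{ene} with $f_s(x)=|x|^{-s}$. Writing $|x|^{-s}$ via the Gamma-function identity $|x|^{-s}=\frac{1}{\Gamma(s/2)}\int_0^\infty e^{-|x|^2 t}\,t^{s/2-1}\,dt$ (this is exactly the representation \eqref{FF-2} with $d\mu_{f_s}(t)=\frac{1}{\Gamma(s/2)}t^{s/2-1}dt$), the Gaussian convergence factor $e^{-\eta|x|^2}$ merely shifts $t\mapsto t+\eta$, so the $\eta\to 0$ limit can be taken once the summations are organized. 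The first main step is therefore to rewrite
\begin{align*}
  \EE_{X,f_s}[\phi]=\frac{1}{2N^d}\sum_{y\in K_N}\sum_{x\in X\BS\{0\}}\phi_y\,\phi_{x+y}\,|x|^{-s},
\end{align*}
and to diagonalize the double sum using the discrete Fourier transform on $\Lam_N(X)$, so that the quadratic form in $\phi$ becomes $\sum_{k\in K_N^*}|\widehat\phi(k)|^2$ times a lattice sum depending only on $k$. This produces the coefficients $\xi_k$ (presumably $\xi_k=|\widehat\phi(k)|^2$ or a positive multiple thereof, via the $(\xi_k)$ defined in \eqref{def-xi}), with $\xi_k\ge 0$ immediate since they are squared moduli.

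The second main step is to identify the $k$-dependent lattice sum with the Epstein zeta function evaluated at the shift $z=A_X^{t}\frac{k}{N}$. After Fourier diagonalization the relevant sum is, up to the $\Gamma$-factor, $\sum_{x\in X\BS\{0\}} e^{2\pi i x\cdot(k/N)}|x|^{-s}$, i.e. a sum over the lattice $X=A_X\Z^d$ with a phase. Changing variables $x=A_X n$ turns $|x|^2$ into the quadratic form $q_X(n)$ and the phase into $e^{2\pi i n\cdot(A_X^{t} k/N)}$, which is exactly the summand defining $\textnormal{Z}\VECMOD{0;A_X^{t}\frac{k}{N}}(q_X;s)$. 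For $\Re s>d$ this matches termwise; for $0<s\le d$ one invokes the analytic continuation \eqref{AnalyticEpstein}. The only subtlety is the treatment of the $k=0$ mode: there the phase is trivial and the sum $\sum_{x}|x|^{-s}$ diverges, but the charge-neutrality constraint \eqref{neutral} forces $\widehat\phi(0)=0$, so that mode drops out of the energy and no regularization is needed for it.

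I would handle the Ewald regularization carefully for the $k\neq 0$ modes, where the divergence (for $s\le d$) is genuine and must be tamed by the $\eta\to 0$ limit: here I split the $t$-integral at $t=1$, and use Jacobi's transformation formula (Proposition \ref{prp-jacobi}) on the small-$t$ piece to convert the sum over $X$ into a rapidly converging sum over the dual lattice $X^*$ translated by $z$, exactly reproducing the two integral terms in \eqref{AnalyticEpstein}. The $-2/s$ pole term in \eqref{AnalyticEpstein} is absorbed by the neutrality condition and by the precise bookkeeping of the $\eta$-regularized self-energy. The main obstacle I anticipate is precisely this interchange of the $\eta\to 0$ limit with the (conditionally convergent) lattice sum and the Mellin-type $t$-integral: I must justify that the Gaussian cutoff produces, in the limit, exactly the analytically continued value with no leftover $s$-dependent additive constant affecting the stated identity, and that the constants produced by the $k=0$ subtraction are consistent across all modes. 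Once the diagonalization and the matching with \eqref{AnalyticEpstein} are established mode-by-mode, summing over $k\in K_N^*$ against $\xi_k$ yields the claimed formula, and positivity of $\xi_k$ completes the statement.
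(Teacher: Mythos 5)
Your proposal is correct and follows essentially the same route as the paper: Fourier diagonalization of the quadratic form into the nonnegative coefficients $\xi_k$, the representation $d\mu_{f_s}(t)=\Gamma(\tfrac s2)^{-1}t^{\frac s2-1}dt$ with an Ewald split of the $t$-integral, Jacobi's transformation formula on the small-$t$ piece, and a mode-by-mode matching with the analytic continuation \eqref{AnalyticEpstein}, with neutrality killing the $k=0$ mode. The only detail to pin down is the constant bookkeeping you flag: the paper takes the split at $t=\pi$ (i.e.\ $\alpha=\sqrt{\pi}$ in Theorem \ref{thm-riesz}) so that $F[k]=\textnormal{Z}\VECMOD{0;A_X^{t}\frac{k}{N}}(q_X;s)+\tfrac{2\pi^{s/2}}{s\Gamma(s/2)}$ exactly, and the leftover constant cancels against the self-energy term $-\tfrac12\mu_{f_s}([0,\pi])$ via the normalization $\sum_{k\in K_N^*}\xi_k=N^d$ coming from \eqref{totalcharge} --- not via charge neutrality, as your sketch suggests.
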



\section{Proofs} \label{sec-proofs} %

\subsection{Proof of Theorem \ref{thm-1} and Corollary \ref{cor-neutral}}

We first give the proof of the Theorem in the case when the interaction
potential is absolutely summable over $X \BS \{ 0 \}$. We then will extend the
result to the case when the interaction potential is not absolutely summable.

\medskip

\paragraph{The absolutely summable case.} In the first part of the proof
(Lemma \ref{lem-s}--\ref{lem-reverse}), we follow the lines of the proof of Born
in a slightly more general setting to write the energy in terms of the Fourier
series \cite{Born-1921}, see Lemma \ref{lem-xi}. In the notation of Born, the
corresponding plane waves $e^{\frac{2i \pi}{N} x\cdot k}$ are called
``Grundpotentiale''. In the second step, we write the transformed expression of
the energy in terms of the translated lattice theta functions
$\theta_{X^*+\frac{k}{N}}(\alp)$.

\medskip

We first note express the energy in terms of the autocorrelation function
$s$:
\begin{lemma}[Energy expressed in $s$] \label{lem-s} %
  Let $d\geq 1$ and let $X\subset \R^d$ be a Bravais lattice. Let $N\geq 2$ and let $\phi : X \to \R$ be an $N$-periodic
  charge distribution satisfying \eqref{totalcharge}. Let $s : X \to \R$ be defined by
  \begin{align} \label{def-s} %
    s_x := \sum_{y\in K_N} \phi_y \phi_{y+x}.
  \end{align}
  Then $s \in \Lam_N(X)$, $s_{-x} = s_x$,
  $\displaystyle \sum_{x \in K_N} s_x = \Big(\sum_{x \in K_N} \phi_x\Big)^2$ and
  $s(0) = N^d$. The energy takes the form
  \begin{align} \label{ene-s} %
    \EE_{X,f}[\phi] \ %
    &\upref{ene}= \frac 1{2 N^d} \sum_{x\in X\BS \{0\} } s_x f(x).
  \end{align}
\end{lemma}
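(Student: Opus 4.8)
The plan is to establish the four algebraic properties of the autocorrelation $s$ by elementary reindexing of finite sums, exploiting only the $N$-periodicity of $\phi$ and the normalization \eqref{totalcharge}, and then to deduce the energy identity \eqref{ene-s} by interchanging the order of the two summations appearing in the definition \eqref{ene}.

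For the properties of $s$, I would first check $N$-periodicity: for $i\in\digint{1}{d}$ one has $s_{x+Nu_i}=\sum_{y\in K_N}\phi_y\phi_{y+x+Nu_i}=\sum_{y\in K_N}\phi_y\phi_{y+x}=s_x$ by $\phi(z+Nu_i)=\phi(z)$, so $s\in\Lam_N(X)$, and it is real since $\phi$ is. The key observation used repeatedly is that the sum of an $N$-periodic function over any translate of the fundamental domain $K_N$ coincides with its sum over $K_N$ itself. Applying this to the substitution $z=y-x$ yields $s_{-x}=\sum_{y\in K_N}\phi_y\phi_{y-x}=\sum_{z\in K_N}\phi_{z+x}\phi_z=s_x$. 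Applying it with $y$ held fixed gives $\sum_{x\in K_N}\phi_{y+x}=\sum_{x\in K_N}\phi_x$, whence $\sum_{x\in K_N}s_x=\sum_{y\in K_N}\phi_y\sum_{x\in K_N}\phi_{y+x}=\big(\sum_{x\in K_N}\phi_x\big)^2$. Finally $s_0=\sum_{y\in K_N}\phi_y^2=N^d$ follows directly from \eqref{totalcharge}.

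For the energy, I would start from \eqref{ene} and swap the two sums so that the inner sum over $y\in K_N$ becomes exactly $s_x$, yielding $\EE_{X,f}[\phi]=\frac{1}{2N^d}\sum_{x\in X\BS\{0\}}s_x f(x)$. The one step genuinely requiring care, and the main obstacle, is the justification of this interchange. In the absolutely summable case relevant to this lemma, $\phi$ takes only finitely many values on $K_N$ and is therefore bounded, while $f\geq 0$ lies in $\ell^1(X\BS\{0\})$; hence the double series converges absolutely and Fubini for series applies. The same bound, via dominated convergence, lets the Gaussian factor $e^{-\eta|x|^2}$ be removed and the limit $\eta\to 0$ in \eqref{ene} be evaluated as the plain sum, so that no regularization survives in \eqref{ene-s}.
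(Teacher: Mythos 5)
Your proof is correct and follows essentially the same route as the paper: reindexing finite sums using $N$-periodicity for the properties of $s$, and exchanging the order of summation for \eqref{ene-s}. The only difference is that you explicitly justify the interchange and the removal of the Gaussian factor via absolute summability and dominated convergence, steps the paper treats as immediate.
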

\begin{proof}
  The formula \eqref{ene-s} follows directly from the definition of energy
  \begin{align*}
    \EE_{X,f}[\phi] \ %
    \lupref{ene}= \frac{1}{2N^d} \sum_{y\in K_N} \sum_{x\in X\BS
      \{0\} } \phi_y \phi_{y+x}f(x). %
  \end{align*}
  by exchanging the sums.  The periodicity of $\phi$ implies $s_{-x} = s_x$ for any $x\in X$. One
  can easily check that $s(0) = \NNN{\phi}{K_N}^2=N^d$. Furthermore, by periodicity of $\phi$, we obtain
  \begin{align*}
  \sum_{x\in K_N} s_x=\sum_{x\in K_N} \sum_{y\in K_N} \phi_y \phi_{y+x}=\sum_{y\in K_N} \phi_y \sum_{x\in K_N} \phi_{x+y}=\Big(  \sum_{x\in K_N} \phi_x \Big)^2.
  \end{align*}
\end{proof}
It is convenient to express the energy in terms of the discrete (inverse)
Fourier transform $\xi$ of the autocorrelation function $s$. In the context of
signal processing, $\xi$ has also been called energy spectral density.
\begin{lemma} [Energy expressed in dual variables] \label{lem-xi} %
  Let $d\geq 1$ and let $X\subset \R^d$ be a Bravais lattice. For $N\geq 2$, let
  $\phi \in \Lam_N(X)$ satisfy \eqref{totalcharge} and let $s$ be defined as in
  \eqref{def-s}. Let $\xi \in \Lam_N(X^*)$ be given by
  \begin{align} \label{def-xi} %
    \xi_k:=\frac{1}{N^{d}}\sum_{y\in K_N} s_y e^{\frac{2\pi i}{N}y\cdot
    k} &&\text{for $k \in X^*$.}
  \end{align}
  Then $\xi_k \in \R$, $\xi_k \geq 0$,
  $\displaystyle \xi_0 =\frac{1}{N^d} \Big(\sum_{x \in K_N} \phi_x\Big)^2$ and
  $\xi_k=\xi_{-k}$ for all $k \in X^*$, and
  \begin{align} \label{xi-totalcharge} %
    \frac 1{N^{d}} \sum_{k \in K_N^*} \xi_k = 1.
  \end{align}
 Furthermore, the energy $E_{X,f}[\phi]$ can be equivalently written as
  \begin{align}\label{ene-xi} %
    \EE_{X,f}[\phi]=\frac{1}{2 N^{d}}\sum_{k\in K_N^*}\xi_k \sum_{x\in X\BS \{0\}} e^{\frac{2\pi i}N x \cdot k} f(x).
  \end{align}
\end{lemma}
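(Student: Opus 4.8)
The plan is to prove Lemma \ref{lem-xi} by recognizing that the vector $\xi$ is (up to normalization) the discrete Fourier transform of the autocorrelation $s$, and then to exploit the fact that $s$ itself is a convolution of $\phi$ with its reflection. Concretely, if I define $\tilde\phi(y) := \phi(-y)$, then \eqref{def-s} reads $s_x = (\phi * \tilde\phi)(x)$ in the convolution notation from the Fourier subsection. By the convolution identity $\widehat{\phi * \psi}(k) = \widehat\phi(k)\,\widehat\psi(k)$ together with the relation $\widehat{\tilde\phi}(k) = \overline{\widehat\phi(k)}$, I expect to obtain that $\xi_k$ is a positive multiple of $|\widehat\phi(k)|^2$. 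This immediately yields both the reality $\xi_k \in \R$ and the nonnegativity $\xi_k \geq 0$, which are otherwise the only nonobvious claims in the lemma.

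Let me organize the verification of each assertion. First, $\xi_k \in \R$ and $\xi_k = \xi_{-k}$ follow from the symmetry $s_{-x} = s_x$ established in Lemma \ref{lem-s}: pairing the term for $y$ with that for $-y$ (using periodicity of $s$ over $K_N$) shows the imaginary parts cancel and $\xi_k = \xi_{-k}$. For nonnegativity I would spell out the computation
\begin{align*}
  \xi_k = \frac 1{N^d} \sum_{y \in K_N} s_y\, e^{\frac{2\pi i}N y \cdot k}
  = \frac 1{N^d} \sum_{y \in K_N} \sum_{z \in K_N} \phi_z \phi_{z+y}\, e^{\frac{2\pi i}N y \cdot k},
\end{align*}
and then substitute $w = z + y$ to factor the double sum as a product of a sum times its complex conjugate, giving $\xi_k = \frac{1}{N^d}\big|\sum_{z \in K_N} \phi_z\, e^{-\frac{2\pi i}N z \cdot k}\big|^2 \geq 0$. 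The value $\xi_0 = \frac{1}{N^d}(\sum_{x\in K_N}\phi_x)^2$ is then the $k=0$ case, matching the sum computed in Lemma \ref{lem-s}. The normalization \eqref{xi-totalcharge} follows by Fourier inversion (or Plancherel): summing $\xi_k$ over $k \in K_N^*$ recovers $s_0 = N^d$ via the orthogonality of the characters, divided by $N^d$, and Lemma \ref{lem-s} gives $s_0 = N^d$; combined with \eqref{totalcharge} this yields $\frac{1}{N^d}\sum_k \xi_k = 1$.

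For the energy formula \eqref{ene-xi}, I would start from the expression \eqref{ene-s} in Lemma \ref{lem-s} and invert the definition \eqref{def-xi}, writing $s_x$ as the inverse discrete Fourier transform of $\xi$ over the periodicity cell. Since $s$ is $N$-periodic and $f$ is summable, I can regroup the sum over $x \in X \BS \{0\}$ by first summing over $K_N$-cosets: replacing $s_x$ by its Fourier representation $s_x = \sum_{k \in K_N^*} \xi_k\, e^{-\frac{2\pi i}N x \cdot k}$ and exchanging the order of summation produces exactly \eqref{ene-xi}. The main technical point to be careful about is the interchange of the infinite sum over $x \in X$ with the finite sum over $k$: this is justified by absolute summability of $f$ over $X \BS \{0\}$ (the hypothesis $f \in \ell^1$), since $|s_x|$ is bounded uniformly by $N^d$. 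The one subtlety worth flagging is a consistent sign convention in the exponent between the definition of $\xi$ and the inversion formula; provided these match, no genuine obstacle remains, as every claim reduces to the convolution structure $s = \phi * \tilde\phi$ and discrete Plancherel.
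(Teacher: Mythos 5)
Your proposal is correct and follows essentially the same route as the paper: both identify $\xi$ as the (normalized) discrete Fourier transform of $s$, obtain nonnegativity from the convolution structure $s=\phi*\tilde\phi$ so that $\xi_k$ is a positive multiple of $|\widehat\phi(k)|^2$, and derive \eqref{ene-xi} by decomposing $X\BS\{0\}$ into $K_N$-cosets and substituting the Fourier representation of $s$. The sign issue you flag is resolved in the paper exactly as you suggest, via the symmetry $f(-x)=f(x)$ (equivalently $\xi_{-k}=\xi_k$).
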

\begin{proof}
  Note that $\xi = N^{-\frac d2}\widecheck s$. Since $s_{-x} = s_x$ for any
  $x\in X$, we have $\xi \in \R$.  The proof is based on Plancherel's identity on $\Lam_N(X)$. We give the calculation in
  detail below, using for notational simplicity, the convention $f(0) := 0$. We
  have, using the symmetry $f(-q)=f(q)$,
  \begin{align*}
    \sum_{x \in X \BS \{ 0 \}} s_x f(x) %
    &= \sum_{\ell \in X} \sum_{y \in K_N} s_y f(N\ell+y) %
    =\sum_{\ell \in X} \sum_{y \in K_N} \Big( \sum_{k\in K_N^*} \xi_k  e^{\frac{-2\pi i}N y \cdot k} \Big) f(N\ell+y)  \\
    &=  \sum_{k \in K_N^*}  \xi_k \sum_{x \in X \BS \{ 0 \}}  e^{\frac{2\pi i}N x\cdot k} f(x).
  \end{align*}
  Furthermore,
  $\xi = N^{-\frac d2}\widecheck s =N^{-\frac d2} \widecheck{\phi * (\phi \circ
    P)} = |\widecheck \phi|^2 \geq 0$, where $P(x) = -x$. The value of $\xi_0$
  follows from $\sum_{x\in K_N} s_x=\left( \sum_{x\in K_N} \phi_x \right)^2$
  (see Lemma \ref{lem-s}). Finally, one can easily check that
  $\skpL{\xi}{1}{K_N^*} = s(0) = N^{d}$.
\end{proof}
Note that due to the inversion symmetry of $s_x$ and $f(x)$, the sums on the
right hand side of \eqref{def-xi} and \eqref{ene-xi} are real numbers. By
combining the sums over positive and negative indices, \eqref{ene-xi} can
e.g. be written as
\begin{align*} %
  \EE_{X,f}[\phi]=\frac{1}{2N^{d}}\sum_{k\in K_N^*}\xi_k \sum_{x\in X\BS \{0\}} \cos\Big( \frac{2\pi}N x \cdot k \Big) f(x).
\end{align*}

\medskip

We also have the reverse transformation. Note that for given $\xi$, $\phi$ is not uniquely defined in general.
\begin{lemma} \label{lem-reverse} %
  Let $\xi \in \Lam_N(X^*)$ satisfy $\xi \geq 0$, $\xi_{-k} = \xi_k$ and
  \eqref{xi-totalcharge}.  Let
  \begin{align} \label{phi-reverse} %
    \phi_x=\frac{1}{N^{\frac d2}}\sum_{k\in K_N^*} \sqrt{\xi_k} \ \cos \Big(\frac{2 \pi}{N} x\cdot k \Big).
  \end{align}
  Then $\phi$ satisfies \eqref{totalcharge} and the formulas \eqref{def-s} and
  \eqref{def-xi} hold.
\end{lemma}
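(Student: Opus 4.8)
The plan is to recognize the map $\xi \mapsto \phi$ defined by \eqref{phi-reverse} as (essentially) the inverse discrete Fourier transform applied to $\sqrt{\xi}$, and then to run the computations of Lemmas \ref{lem-s}--\ref{lem-xi} backwards. Set $\psi_k := \sqrt{\xi_k}$ for $k \in K_N^*$, extended $N$-periodically to $X^*$; since $\xi \geq 0$ this is a well-defined real function, and the hypothesis $\xi_{-k} = \xi_k$ gives $\psi_{-k} = \psi_k$. Writing $\cos\theta = \tfrac12(e^{i\theta}+e^{-i\theta})$ in \eqref{phi-reverse} and substituting $k \mapsto -k$ in the resulting second sum, the evenness of $\psi$ yields
\begin{align*}
  \phi_x = \frac{1}{N^{d/2}} \sum_{k \in K_N^*} \psi_k\, e^{\frac{2\pi i}{N} x \cdot k} = \widecheck{\psi}(x).
\end{align*}
Thus $\phi = \widecheck\psi$; in particular $\phi$ is real (taking complex conjugates and substituting $k \mapsto -k$ leaves the sum invariant), and $\widehat\phi = \psi = \sqrt\xi$, since the Fourier transform and its inverse are mutually inverse bijections on $\Lam_N$.

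First I would check the normalization \eqref{totalcharge}. By Plancherel's identity (with constant $1$) and $\widehat\phi = \psi$,
\begin{align*}
  \frac{1}{N^d} \sum_{y \in K_N} \phi_y^2 = \frac{1}{N^d}\, \skpL{\widehat\phi}{\widehat\phi}{K_N^*} = \frac{1}{N^d} \sum_{k \in K_N^*} \psi_k^2 = \frac{1}{N^d} \sum_{k \in K_N^*} \xi_k = 1,
\end{align*}
the last equality being \eqref{xi-totalcharge}. For the sign condition, $\phi(0) = N^{-d/2}\sum_{k \in K_N^*}\sqrt{\xi_k} \geq 0$, and this is strictly positive because $\sum_{k} \xi_k = N^d > 0$ forces at least one $\xi_k > 0$.

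Finally I would verify the round-trip identities. Equation \eqref{def-s} simply defines the autocorrelation $s$ of $\phi$, so the real content is that feeding this $s$ into \eqref{def-xi} returns the $\xi$ we started from. Repeating verbatim the computation in the proof of Lemma \ref{lem-xi} — expand $s_y = \sum_{w \in K_N} \phi_w \phi_{w+y}$, reindex, and factor the exponential (using that $e^{\frac{2\pi i}{N}(\cdot)\cdot k}$ is $N$-periodic since $u_i \cdot k \in \Z$ for $k \in X^*$) — gives, for the quantity defined by \eqref{def-xi},
\begin{align*}
  \frac{1}{N^d}\sum_{y \in K_N} s_y\, e^{\frac{2\pi i}{N} y \cdot k} = \Big| \tfrac{1}{N^{d/2}} \sum_{w \in K_N} \phi_w\, e^{-\frac{2\pi i}{N} w \cdot k} \Big|^2 = |\widehat\phi(k)|^2 = \psi_k^2 = \xi_k,
\end{align*}
which is exactly \eqref{def-xi}. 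This closes the argument.

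I do not expect a serious obstacle: the lemma is the formal inverse of the transforms set up in Lemmas \ref{lem-s}--\ref{lem-xi}, and every step is a bookkeeping computation. The only points requiring care are the two symmetry reductions — using $\xi_{-k} = \xi_k$ to pass from the cosine expression to the exponential form of the inverse transform and to guarantee that $\phi$ is real — and the observation that $\xi \not\equiv 0$ (from \eqref{xi-totalcharge}) is what secures the strict inequality $\phi(0) > 0$ demanded by \eqref{totalcharge}.
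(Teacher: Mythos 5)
Your proof is correct and follows essentially the same route as the paper: recognizing \eqref{phi-reverse} as the inverse discrete Fourier transform of $\sqrt{\xi}$ and inverting the computations of Lemmas \ref{lem-s}--\ref{lem-xi}, which is exactly the ``straightforward calculation'' the paper's own (much terser) proof alludes to after setting $s = N^{d/2}\widehat\xi$. Your version is in fact slightly more careful, since you explicitly verify the sign condition $\phi(0)>0$ and the symmetry reductions needed to pass from the cosine form to the exponential form.
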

\begin{proof}
  We define $s \in \Lam_N(X)$ by
    $s = N^{\frac d2} \widehat \xi$, i.e.
    \begin{align*}
      s_x = \sum_{k \in K_N^*} \xi_k e^{-\frac{2\pi i}N x \cdot k},
    \end{align*}
    so that \eqref{def-xi} holds. A straightforward calculation now shows that
    the identities \eqref{totalcharge} and \eqref{def-s} are satisfied by the
    function $\phi$ defined in \eqref{phi-reverse}. 
\end{proof}
We turn to the proof of Theorem in the case when the interaction potential is
summable:
\begin{proposition}[Theorem $1$ --- the summable case] \label{prp-thm-11} %
  Suppose that the assumptions of Theorem \ref{thm-1} hold and suppose that
  $f \in \ell^1(X \BS \{ 0 \})$. Then the statement of Theorem \ref{thm-1} holds
\end{proposition}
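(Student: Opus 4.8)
The plan is to combine the three transformation lemmas with the Bernstein-type representation \eqref{FF-2} and Jacobi's formula \eqref{prp-jacobi}, reducing the entire problem to a linear optimization over admissible power spectra. By Lemmas \ref{lem-s}, \ref{lem-xi} and \ref{lem-reverse}, minimizing $\EE_{X,f}$ over charge configurations obeying \eqref{totalcharge} is equivalent to minimizing
\[
  \EE_{X,f}[\phi]=\frac{1}{2N^d}\sum_{k\in K_N^*}\xi_k\, g(k),\qquad g(k):=\sum_{x\in X\BS\{0\}}e^{\frac{2\pi i}{N}x\cdot k}f(x),
\]
over all $\xi\in\Lam_N(X^*)$ with $\xi_k\geq 0$, $\xi_{-k}=\xi_k$ and $\frac{1}{N^d}\sum_{k\in K_N^*}\xi_k=1$. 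Since the admissible $\xi$ form a simplex of fixed total mass $N^d$, and since $g(-k)=g(k)\in\R$ by the inversion symmetry of $f$, the minimum equals $\tfrac12\min_k g(k)$ and is attained exactly when $\xi$ is supported on $\argmin_k g$. Everything therefore reduces to identifying $\argmin_k g$.

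First I would rewrite $g(k)$ through translated theta functions of the dual lattice. Inserting \eqref{FF-2}, using $|e^{\frac{2\pi i}{N}x\cdot k}|=1$ together with $f\in\ell^1(X\BS\{0\})$ to justify the interchange of sum and integral (Fubini, since $\sum_{x\neq 0}\int e^{-t|x|^2}\,d\mu_f(t)=\sum_{x\neq0}f(x)<\infty$), and applying Poisson summation in the form \eqref{prp-jacobi} to the Gaussian $e^{-t|x|^2}=e^{-\pi\alp|x|^2}$ with $\alp=t/\pi$, the $x=0$ terms cancel in any difference and one obtains, for $k_0:=Nz_0$,
\[
  g(k)-g(k_0)=\int_0^\infty\Big(\tfrac{\pi}{t}\Big)^{\!d/2}\big[\theta_{X^*+\frac kN}(\tfrac{\pi}{t})-\theta_{X^*+z_0}(\tfrac{\pi}{t})\big]\,d\mu_f(t).
\]
This is the crux. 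Because $z_0$ minimizes $z\mapsto\theta_{X^*+z}(\alp)$ for \emph{every} $\alp>0$ by \eqref{prop-minz0}, and because $\mu_f\geq 0$, the integrand is pointwise nonnegative, whence $g(k)\geq g(k_0)$ for all $k\in K_N^*$. Thus $k_0=Nz_0$ (which lies in $K_N^*$ by the hypothesis $z_0\in\frac1N X^*$ and $\lam_i\in[0,1)$) realizes $\min_k g$, the minimal energy is $\tfrac12 g(k_0)$, and reconstructing via \eqref{phi-reverse} from $\xi=\tfrac{N^d}{2}(\delta_{k_0}+\delta_{-k_0})$ (or $\xi=N^d\delta_{k_0}$ when $2z_0\in X^*$) yields precisely the configuration \eqref{generalmin}, with $c=\sqrt2$ (resp. $c=1$) fixed by \eqref{totalcharge}.

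For the uniqueness dichotomy I would argue that any minimizing $\xi$ is supported on $\argmin_k g$, which by the strict positivity above and the continuity of $\alp\mapsto\theta_{X^*+z}(\alp)$ coincides with the set of $k\in K_N^*$ whose representatives $k/N$ are absolute theta-minimizers in the fundamental cell. If \eqref{prop-minz0} has a single solution, the reflection symmetry of the theta function (Lemma \ref{lem-thetasymmetry}) forces it to be the cell center, so $2z_0\in X^*$, $-k_0\equiv k_0$ in $K_N^*$, and $\xi=N^d\delta_{k_0}$ is the unique minimizer; if there are exactly two solutions $z_0,-z_0$, then $\xi\geq0$, $\xi_{-k}=\xi_k$ and $\sum\xi_k=N^d$ force $\xi_{k_0}=\xi_{-k_0}=N^d/2$. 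In both cases the minimizing $\xi$ is unique and \eqref{generalmin} follows; if there are more than two solutions, $\argmin_k g$ has more than two indices and the admissible $\xi$ form a positive-dimensional family, giving infinitely many inequivalent minimizers.

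The main obstacle is twofold. The essential analytic point — and the reason the hypothesis demands a \emph{common} minimizer $z_0$ of $\theta_{X^*+z}(\alp)$ for all $\alp>0$ rather than for one fixed $\alp$ — is the sign argument in the displayed identity for $g(k)-g(k_0)$; it is precisely the completely monotone (Hausdorff–Bernstein–Widder) structure encoded in \eqref{FF-2} that writes $f$ as a nonnegative superposition of Gaussians and lets the all-$\alp$ theta minimization control every Fourier coefficient simultaneously. The more delicate bookkeeping lies in reconstructing $\phi$ from the uniquely determined $\xi$: since $\xi$ fixes only $|\widecheck\phi(k)|$, there is a residual phase in $\widecheck\phi(k_0)$, which collapses to a mere sign — hence a genuinely unique minimizer, selected by $\phi(0)>0$ — exactly when $2z_0\in X^*$, and otherwise must be accounted for among the symmetries of $X$; this is the step that requires the most care to state precisely.
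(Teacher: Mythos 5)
Your proposal is correct and follows essentially the same route as the paper: reduce to a linear optimization over the power spectrum $\xi$ via Lemmas \ref{lem-s}, \ref{lem-xi} and \ref{lem-reverse}, use the Bernstein representation \eqref{FF-2} together with Jacobi's transformation formula \eqref{prp-jacobi} to express the Fourier multiplier through $\theta_{X^*+k/N}$, invoke \eqref{prop-minz0} and $\mu_f\geq 0$ to locate $\argmin$, and resolve the phase ambiguity in reconstructing $\phi$ from $|\widecheck\phi|^2=\xi$ using realness and the normalization. Your explicit case split on whether $2z_0\in X^*$ is a slightly more careful bookkeeping of a point the paper treats via Lemma \ref{lem-thetasymmetry}, but the argument is the same.
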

\begin{proof}
  Let $s$ and $\xi$ be defined as in Lemma \ref{lem-s} and Lemma
  \ref{lem-xi}. Then we have
  \begin{align*}
    \EE_{X,f}[\phi]=\frac{1}{2N^{d}}\sum_{k \in K_N^*}\xi_k \sum_{x\in X\BS \{0\}} e^{\frac{2 \pi i}N x\cdot k} f(x).
  \end{align*}
  where $\xi \geq 0$ and $\xi$ satisfies \eqref{xi-totalcharge}. This suggests to
  consider
  \begin{align}\label{ene-xi4}
    E[k] \ :=\ \sum_{x\in X\BS \{0\}} e^{\frac{2 i\pi}N x\cdot k} f(x).
  \end{align}
  for $k \in X^*$.  We note that $E[k]$ cannot be minimized for $k \in NX^*$.
  Indeed, in this case the energy decreases by just switching the sign of a
  single charge in the periodicity cell. In the following, we hence assume that
  $k\in X^*\backslash NX^*$.

  \medskip

  In view of Definition \ref{def-FF} and by Fubini's Theorem, we get for any
  $k\in X^*\backslash NX^*$,
  \begin{align*}
    E[k] %
    &= \sum_{x\in X\backslash \{0\}} \int_0^{\infty}  e^{-|x|^2 t}e^{2\pi i x\cdot \frac{k}{N}}d\mu_f(t) %
      = \int_0^{\infty} \Big( \sum_{x\in X}e^{-|x|^2 t}e^{2\pi i x\cdot \frac{k}{N}}-1\Big)d\mu_f(t).
  \end{align*}
  By Jacobi's transformation formula \eqref{prp-jacobi}, this
  implies
  \begin{align*}
    E[k]&=  \int_0^{\infty} \Big(\pi^{\frac d2} t^{-\frac{d}{2}}\sum_{p\in X^*} e^{-\frac{\pi^2}{t}|p+\frac{k}{N}|^2}- 1\Big)  d\mu_f(t)\\
        &=\int_0^{\infty} \Big(\pi^{\frac d2} t^{-\frac{d}{2}} \theta_{X^*+ \frac{k}{N}}(\tfrac \pi t)  -1 \Big) d\mu_f(t).
  \end{align*}  
  Since $\mu_f$ is a non-negative measure, if $z_0$ is a minimizer on $X^*$ of
  $z\mapsto \theta_{X^*+z}(\alp)$ for all $\alp>0$, then $k_0=Nz_0$ minimizes
  $E$. By Lemma \ref{lem-reverse}, a minimizing
  configuration $\phi^*$ is then given, for any $x\in X$, by
  \begin{align} \label{theform} %
    \phi^*(x)= c \cos\big(2\pi x\cdot z_0 \big),
  \end{align}
  where $c$ is determined by the constraint \eqref{totalcharge}. This
    concludes the proof of existence.
  
  \medskip
  
  We turn to the proof of uniqueness. By assumption there are at most two
  minimizers $k_0$ and $k_1$ of $E[k]$. In view of Lemma
  \ref{lem-thetasymmetry}, these two minimizers are symmetry related, i.e. we
  have $\frac{k_1}{N} = \sum_{i=1}^d u_i^* - \frac{k_0}{N}$. Therefore,
  the minimizer of \eqref{ene-xi} in the class of functions $\xi^*$ which
  satisfy $\xi^* \geq 0$, $\xi_{-k}^*=\xi_k^*$ for any $k\in X^*$ and
  \eqref{xi-totalcharge} is hence given by $\xi^* \in \Lam_N(X^*)$, defined by
  $\xi_{k_0}^*=\xi_{k_1}^*=\frac{N^d}{2}$, by periodicity and the fact that
  $\xi_{k_0}^*=\xi_{-k_0}^*$, and $\xi_{k}^*=0$ for
  $k \in K_N^* \BS \{ k_0 ,k_1\}$. It follows that the corresponding
  autocorrelation function $s^*$ is given, for any $x\in X$, by
  \begin{equation} \label{uni-s} %
    s^*_x= \frac{N^d}{2}\left(e^{\frac{2i \pi}{N}k_0\cdot x}+e^{\frac{2i \pi}{N}k_1\cdot x}  \right)=\frac{N^d}{2}\left(e^{\frac{2i \pi}{N}k_0\cdot x}+e^{-\frac{2i \pi}{N}k_0\cdot x}  \right) 
  \end{equation}
  For any charge configuration $\phi^* \in \Lam_N(X)$, its (inverse) Fourier
    coefficients $\widecheck {\phi^*_k}$ satisfy the equation
    $|\widecheck{\phi^*_k}|^2 = \xi_k^*$ as a straightforward calculation
    shows. Any charge configuration with associated autocorrelation function
    $s$, given by \eqref{uni-s} therefore is of the form
    \begin{equation} \label{uni-s2} %
     \phi^*_x= \frac{\alp_1}{\sqrt{2}} e^{\frac{2i
            \pi}{N}k_0\cdot x}+ \frac{\alp_2 }{\sqrt{2}} e^{-\frac{2i \pi}{N}k_0\cdot x}
  \end{equation}
  for coefficients $\alp_i \in \C$ with $|\alp_i| = 1$. We next use the fact
  that the charge configuration $\phi^*$ is real. Furthermore, by a shift of
  coordinates, we can assume that $\phi^*_0 > 0$ and that $\phi^*$ attains its
  maximum at $x = 0$. With these assumptions, one can show that $\alp_1 = 1$ and
  $\alp_2 = 1$.  Therefore, in the case considered the charge distribution
  $\phi^*$ is uniquely determined by the autocorrelation function $s$ and is
  hence unique and given by \eqref{theform}.

\medskip

Suppose that there are at least three solutions of \eqref{prop-minz0}. Then as
above we can construct two symmetric functions
$\xi^{(1)}, \xi^{(2)} \in \Lam_N(X^*)$ satisfying the properties of Lemma
\ref{lem-xi}. Then any convex combination of
$\xi^{(\theta)} := \theta \xi^{(1)} + (1-\theta)\xi^{(2)}$ for
$\theta \in [0,1]$ yields a minimizing charge configuration $\phi^{(\theta)}$
(by Lemma \ref{lem-reverse}). We hence have constructed a one-parameter family
of optimal charge configurations $\phi^{(\theta)}$ such that the corresponding
autocorrelation functions $s_\theta$ are pairwise different.
\end{proof}

  \medskip
  
\begin{proof}[Proof of Corollary \ref{cor-neutral}] %
  Let $\phi^* \in \Lam_N(X)$ be a minimizer of $\EE_{X,f}$ and let $s^*$ and
    $\xi^*$ be defined by \eqref{def-s} and \eqref{def-xi}. In view of the proof
    of Theorem \ref{thm-1}, it then follows that $\xi^*$ is the convex
    combinations of functions $\xi$ with $\xi_0 = 0$. It follows that
    $\xi^*_0 = 0$. In view of Lemma \ref{lem-xi}, this shows that the
    configuration is charge neutral.
\end{proof}

\paragraph{The nonsummable case } 
We turn to the case when the potential energy is not summable. In this case, the
total energy of the lattice can be calculated using the Ewald summation method.
The idea of Ewald summation with Gaussian convergent factor is to approximate the energy by replacing the
interaction potential $f(x)$ by a family of screened interaction potentials
$f(x) e^{-\eta |x|^2}$ (for some small parameter $\eta > 0$), and to split the
screened interaction potential into a short-range part and a long-range part
(for some cut-off parameter $\alp > 0$). We follow the strategy in
  \cite{Ewaldpolytropic} where the Ewald summation has been used to calculate
  the energy of the Riesz potentials $f_s(x)=|x|^{-s}$ in dimensions
  $d = 1, 2,3$ and for $s \geq 1$ in a general setting without assuming charge
  neutrality.

\begin{theorem}[Ewald summation]\label{thm-riesz} %
  Let $d\geq 1$, let $X\subset \R^d$ be a Bravais lattice and $f\in \FF$. Then,
  for any $N\geq 2$, any $N$-periodic charge distribution $\phi$ satisfying
  \eqref{totalcharge} and \eqref{neutral}, and for any $\nu>0$, we have
  \begin{align*} \EE_{X,f}[\phi] %
    &= \frac{1}{2N^d}\sum_{x \in X \BS \{ 0 \}} s_x f_{1}^{(\nu)}(x) +
      \frac{1}{2N^d} \sum_{p\in X^* } \xi_{p}
      f_{2}^{(\nu)}\left(\frac{{p}}{N}\right)
      -\frac{\mu_f\left([0,\nu^2]  \right)}{2},
    \end{align*}
   where 
  \begin{align} \label{def-fs12} %
     f_1^{(\nu)}(x)=\int_{\nu^2}^{+\infty} e^{-t|x|^2}d\mu_f(t) , %
    && %
       f_{2}^{(\nu)}(x)=\pi^{\frac{d}{2}} \int_0^{\nu^2} t^{-\frac d2} e^{-\frac{\pi^2}{t}|x|^2}d\mu_f(t).
  \end{align}
\end{theorem}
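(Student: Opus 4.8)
The plan is to run the classical Ewald splitting directly on the autocorrelation form of the energy. Keeping the Gaussian screening factor in place, Lemma \ref{lem-s} gives $\EE_{X,f}[\phi] = \lim_{\eta\to0}\frac{1}{2N^d}\sum_{x\in X\BS\{0\}}s_x f(x)e^{-\eta|x|^2}$, where for each fixed $\eta>0$ every sum below converges absolutely thanks to the Gaussian. Using the representation \eqref{FF-2}, I would split the measure at the cut-off $\alp^2$, writing $f = f_1^{(\alp)} + g^{(\alp)}$ with $f_1^{(\alp)}$ as in \eqref{def-fs12} and $g^{(\alp)}(x) := \int_0^{\alp^2}e^{-t|x|^2}\,d\mu_f(t)$. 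The short-range part $f_1^{(\alp)}$ decays like a Gaussian of rate $\alp^2$, so it lies in $\ell^1(X\BS\{0\})$; this uses $\int_{\alp^2}^\infty(\theta_X(t/\pi)-1)\,d\mu_f(t)<\infty$, which follows from finiteness of $f$ at a single point. Hence dominated convergence removes the screening and yields the first term $\frac{1}{2N^d}\sum_{x\in X\BS\{0\}}s_x f_1^{(\alp)}(x)$ directly.

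For the long-range part I would first isolate the diagonal: since $s_0 = N^d$ and $g^{(\alp)}(0) = \mu_f([0,\alp^2])$, we get $\frac{1}{2N^d}\sum_{x\in X\BS\{0\}}s_x g^{(\alp)}(x)e^{-\eta|x|^2} = \frac{1}{2N^d}\sum_{x\in X}s_x g^{(\alp)}(x)e^{-\eta|x|^2} - \frac{\mu_f([0,\alp^2])}{2}$, producing the constant correction. The full sum over $X$ I would transform to dual variables exactly as in the proof of Lemma \ref{lem-xi}, using $s_x = \sum_{k\in K_N^*}\xi_k e^{-\frac{2\pi i}{N}x\cdot k}$ together with $N$-periodicity, to obtain $\sum_{k\in K_N^*}\xi_k\sum_{x\in X}e^{-\frac{2\pi i}{N}x\cdot k}g^{(\alp)}(x)e^{-\eta|x|^2}$. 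Applying Fubini (the integrand is nonnegative after taking absolute values) and then Jacobi's transformation formula \eqref{prp-jacobi} applied to $X^*$ turns the inner sum into $\int_0^{\alp^2}(\pi/(t+\eta))^{d/2}\sum_{p\in X^*}e^{-\frac{\pi^2}{t+\eta}|p+k/N|^2}\,d\mu_f(t)$. Finally, folding the outer sum over $k\in K_N^*$ together with the dual sum over $p\in X^*$ into a single sum over $q\in X^*$, which is legitimate because $\xi$ is $N$-periodic so $\xi_{k+Np}=\xi_k$, recognizes the expression, after $\eta\to0$, as $\frac{1}{2N^d}\sum_{q\in X^*}\xi_q f_2^{(\alp)}(q/N)$.

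The delicate point, and the place where charge neutrality is essential, is the $q=0$ term. For nonsummable $f$ the measure $\mu_f$ places enough mass near $t=0$ that $f_2^{(\alp)}(0)=\pi^{d/2}\int_0^{\alp^2}t^{-d/2}\,d\mu_f(t)$ may diverge (e.g. for Riesz potentials with $s\le d$). This is precisely cancelled by $\xi_0 = \frac{1}{N^d}(\sum_{x\in K_N}\phi_x)^2 = 0$, which holds by the neutrality assumption \eqref{neutral}: for every $\eta>0$ the $q=0$ term equals $\xi_0\int_0^{\alp^2}(\pi/(t+\eta))^{d/2}\,d\mu_f(t)=0$, so it never appears and the sum reduces to one over $q\in X^*\BS\{0\}$. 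I expect the main obstacle to be the rigorous justification of the two limit interchanges. The $\eta\to0$ limit I would carry out by monotone convergence: after dropping the $q=0$ term each summand $\xi_q\int_0^{\alp^2}(\pi/(t+\eta))^{d/2}e^{-\frac{\pi^2}{t+\eta}|q/N|^2}\,d\mu_f(t)$ is nonnegative (since $\xi_q\ge0$ by Lemma \ref{lem-xi}) and increases monotonically as $\eta\downarrow0$ to $\xi_q f_2^{(\alp)}(q/N)$, which is finite for $q\ne0$ because the Gaussian kills the singularity of $t^{-d/2}$ at $t=0$ while $\mu_f((0,\alp^2])<\infty$. Convergence of the resulting series follows from the Gaussian decay $f_2^{(\alp)}(x)\lesssim e^{-\pi^2|x|^2/\alp^2}$ for large $|x|$. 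Summing the short-range, long-range, and diagonal contributions then gives the stated identity for every $\alp>0$.
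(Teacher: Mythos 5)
Your proposal follows essentially the same route as the paper's proof: split the measure at $t=\alpha^2$, pass to the limit $\eta\to0$ directly in the short-range part, add back the $x=0$ term (worth $\mu_f([0,\alpha^2])/2$) in the long-range part, convert to dual variables via the Fourier representation of $s$ and Jacobi's transformation formula, and invoke $\xi_0=0$ from charge neutrality to kill the $q=0$ mode. One small correction: the map $\eta\mapsto (t+\eta)^{-d/2}e^{-\pi^2|q/N|^2/(t+\eta)}$ is not monotone in $\eta$ in general (its derivative changes sign at $t+\eta = 2\pi^2|q/N|^2/d$), so the final interchange should be justified by dominated rather than monotone convergence --- e.g. via $\sup_{0<u\le \alpha^2+1}u^{-d/2}e^{-c/u}\lesssim c^{-d/2}e^{-c/(2(\alpha^2+1))}$ with $c=\pi^2|q/N|^2$, which is summable over $q\in X^*\setminus\{0\}$.
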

\begin{proof}
We write the approximated interaction potential as
\begin{align}\label{alt-decompo}
f(x)e^{-\eta |x|^2}=\int_0^{\nu^2} e^{-(t+\eta)|x|^2}d\mu_f(t)+\int_{\nu^2}^{+\infty} e^{-(t+\eta)|x|^2}d\mu_f(t).
\end{align}
The second integral in \eqref{alt-decompo} is absolutely integrable and the limit $\eta\to 0$ can be taken directly. We hence obtain
\begin{align*}  %
  \EE_{X,f}[\phi] \ %
  &= \lim_{\eta \to 0} \Big(\frac{1}{2N^d} \sum_{x\in X\BS \{0\} } s_x  f(x) e^{-\eta |x|^2} \Big) %
  =  \frac{1}{2N^d} \sum_{x \in X \BS \{ 0 \}} s_x f_{1}^{(\nu)}(x) + I_2,
  \end{align*}
  where 
  \begin{align*}
I_2 := \lim_{\eta \to 0} \Big( \frac 1{2 N^d } \sum_{x \in X \BS \{ 0 \}} s_x \int_0^{\nu^2}  e^{-(t + \eta) |x|^2} \ d\mu_f(t)\Big).
  \end{align*}
For the term $I_2$, we transform into dual variables with help of Jacobi's transformation formula \eqref{prp-jacobi}. Taking into account the fact that $s_0 = N^d$, we have
  \begin{align*}
    \sum_{x \in X \BS \{ 0 \}} s_x e^{-(t + \eta) |x|^2} %
    &= \sum_{\ell \in X}\sum_{y\in K_N} s_y  e^{-(t + \eta) |y + N\ell|^2} - N^d\\
    &= \frac {\pi^{\frac d2}}{(t+\eta)^{\frac d2} N^d}  \sum_{y \in K_N} s_y \sum_{{p} \in X^*}   e^{2\pi i {p} \cdot \frac yN} e^{- \frac{\pi^2}{(t+\eta)N^2}|{p}|^2} - N^d.
    \end{align*}
    In view of the definition of $\xi$ in Lemma \ref{lem-xi}, we get
\begin{align*}
    \sum_{x \in X \BS \{ 0 \}} s_x e^{-(t + \eta) |x|^2} %
    &= \frac{\pi^{\frac d2}}{(t+\eta)^{\frac d2}} \sum_{{p} \in X^*}   \xi_{p} e^{- \frac{\pi^2}{(t+\eta)N^2}|{p}|^2} -N^d.
  \end{align*}
  Hence, since $\xi_0=0$ as a consequence of the charge neutrality of $\phi$, we
  arrive at 
  \begin{align*}
    I_2 &=\frac{\pi^{\frac{d}{2}}}{2N^d}\lim_{\eta\to 0} \sum_{p\in X^*} \xi_p \int_0^{\nu^2} (t+\eta)^{-\frac d2} e^{-\frac{\pi^2}{(t+\eta)N^2}|p|^2}d\mu_f(t)-\frac{1}{2}\int_0^{\nu^2} d\mu_f(t)\\
        &=\frac{\pi^{\frac{d}{2}}}{2N^d}\sum_{p\in X^*} \xi_p \int_0^{\nu^2} t^{-\frac d2} e^{-\frac{\pi^2}{t}\left|\frac{p}{N}\right|^2}d\mu_f(t)-\frac{\mu_f\left( [0,\nu^2] \right)}{2},
  \end{align*}
  and the result is proved.
    \end{proof}
    
Following the same arguments as in the summable case, we get
\begin{lemma} [Energy expressed in dual variables] \label{lem-xi-ns} %
  Let $d\geq 1$, $X\subset \R^d$ be a Bravais lattice and $f\in \FF$. Let $N\geq 2$ and
  let $\phi : X \to \R$ be a $N$-periodic charge distribution satisfying
  \eqref{totalcharge}. Let $s$ and $\xi$ be defined as in \eqref{def-s} and
  \eqref{def-xi}.  Then $\xi$ is $N$-periodic and satisfies $\xi \in \R$,
  $\xi \geq 0$ and $\xi_{-k} = \xi_k$. The constraint \eqref{totalcharge} takes
  the form $\skp{\xi}{1} = N^d$ and the energy $\EE_{X,f}[\phi]$ is expressed as
  \begin{align*}
    \EE_{X,f}[\phi]= &\frac{1}{2N^d} \sum_{k\in K_N^*} \xi_k %
    \Big(\sum_{x \in X \BS \{ 0 \}} e^{\frac{2\pi i}{N} x \cdot k} f_{1}^{(\nu)}(x) %
    + \sum_{q \in X^* } f_{2}^{(\nu)}(q + \tfrac kN) \Big) -\frac{\mu_f\left( [0,\nu^2] \right)}{2}.
  \end{align*}
\end{lemma}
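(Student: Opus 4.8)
The plan is to reduce everything to the corresponding computation in the summable case (Lemma \ref{lem-xi}), now applied to the short- and long-range pieces produced by the Ewald decomposition of Theorem \ref{thm-riesz}. The structural properties of $\xi$ require no new argument: since $s$ and $\xi$ are defined through the finite sums \eqref{def-s} and \eqref{def-xi}, the claims that $\xi$ is $N$-periodic, real-valued, symmetric ($\xi_{-k}=\xi_k$) and nonnegative are proved verbatim as in Lemma \ref{lem-xi}, with nonnegativity coming from $\xi = N^{-d/2}|\widecheck{\phi}|^2$; likewise, Plancherel's identity turns the normalization \eqref{totalcharge} into $\skp{\xi}{1}=N^d$. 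None of these steps sees whether $f$ is summable.

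For the energy identity I would start from Theorem \ref{thm-riesz}, which for any fixed $\alp>0$ writes $\EE_{X,f}[\phi]$ as a short-range real-space term $\frac{1}{2N^d}\sum_{x\in X\BS\{0\}} s_x f_1^{(\alp)}(x)$, a long-range dual term $\frac{1}{2N^d}\sum_{p\in X^*}\xi_p f_2^{(\alp)}(p/N)$, and the constant $-\frac12\mu_f([0,\alp^2])$. The short-range term is handled by the identity already established inside the proof of Lemma \ref{lem-xi}: for any $g$ that is absolutely summable on $X\BS\{0\}$ one has $\sum_{x\in X\BS\{0\}} s_x g(x)=\sum_{k\in K_N^*}\xi_k\sum_{x\in X\BS\{0\}} e^{\frac{2\pi i}{N}x\cdot k} g(x)$, obtained by writing $x=N\ell+y$ with $\ell\in X$, $y\in K_N$, inserting the Fourier expansion of $s_y$, and exchanging the sums. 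Applying this with $g=f_1^{(\alp)}$ — which is summable because it carries only the mass of $\mu_f$ on $t\geq\alp^2$, exactly the integrability already used in proving Theorem \ref{thm-riesz} — reproduces the first summand in the bracket.

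For the long-range term I would exploit the $N$-periodicity of $\xi$ and a coset decomposition of the dual lattice. Every $p\in X^*$ is uniquely $p=k+Nq$ with $k\in K_N^*$ and $q\in X^*$; since $Nq\in NX^*$, periodicity gives $\xi_{k+Nq}=\xi_k$, while $f_2^{(\alp)}(p/N)=f_2^{(\alp)}(q+k/N)$. Regrouping over these cosets then yields $\sum_{p\in X^*}\xi_p f_2^{(\alp)}(p/N)=\sum_{k\in K_N^*}\xi_k\sum_{q\in X^*} f_2^{(\alp)}(q+k/N)$, which is the second summand. Adding the two contributions together with the constant $-\frac12\mu_f([0,\alp^2])$ produces the stated formula.

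The only genuinely delicate points, and where I would be most careful, are the two rearrangements of infinite sums. The swap of summation order in the short-range term needs absolute summability of $f_1^{(\alp)}$ over $X\BS\{0\}$, and the regrouping by cosets in the long-range term needs absolute convergence of $\sum_{p\in X^*}\xi_p f_2^{(\alp)}(p/N)$. Both are controlled by the Gaussian decay visible in \eqref{def-fs12} (for $t\geq\alp^2$ and $t\leq\alp^2$ respectively) together with the boundedness of the finite-valued periodic sequence $\xi$, and in fact are inherited from the convergence already justified in the proof of Theorem \ref{thm-riesz}; no estimate beyond these is needed.
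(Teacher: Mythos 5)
Your proposal is correct and takes essentially the same route as the paper, which likewise reduces both sums to the computation of Lemma \ref{lem-xi} after the Ewald decomposition of Theorem \ref{thm-riesz}. The one point you should state explicitly rather than attribute to Gaussian decay is that the single term with $q+\tfrac kN=0$ in the dual sum involves $f_2^{(\alp)}(0)$, which is infinite for nonsummable $f$, so its harmlessness rests precisely on $\xi_0=N^{-d}\sum_{p\in K_N}s_p=0$, i.e.\ on the charge neutrality \eqref{neutral} --- this is exactly the fact the paper's own proof singles out.
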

\begin{proof}
  The argument for both sums proceeds analogously
  as in the proof of Lemma \ref{lem-xi}, using the fact that
  $\xi_0=N^{-d}\sum_{p\in K_N} s_p=0$ because $\phi$ satisfies
  \eqref{neutral}. 
\end{proof}
We are ready to give the proof of Theorem \ref{thm-1} in the non-summable case:
\begin{proof}[Proof of Theorem \ref{thm-1} in the non-summable case]
  We have already  shown that the statement of Theorem holds if $f$ is
    summable. We conclude the argument now for the non-summable case. By Lemma
  \ref{lem-xi-ns}, we have
  \begin{align} \label{renorm-xis} %
    \EE_{X,f}[\phi]= &\frac{1}{2N^d} \sum_{k\in K_N^*} \xi_k %
    \Big(\sum_{x \in X \BS \{ 0 \}} e^{\frac{2\pi i}{N} x \cdot k} f_{1}^{(\nu)}(x) %
    + \sum_{q \in X^* } f_{2}^{(\nu)}(q + \tfrac kN) \Big)  -\frac{\mu_f\left( [0,\nu^2] \right)}{2}.
  \end{align}
 Hence, it is enough to minimize
  \begin{align}\label{FkforEpstein}
    F[k] &:= %
    \sum_{x \in X \BS \{ 0 \}} e^{\frac{2\pi i}{N} x \cdot k} f_{1}^{(\nu)}(x)%
    + \sum_{q \in X^* } f_{2}^{(\nu)}(q + \tfrac kN)  \\
    &=\int_{\nu^2}^{+\infty} \sum_{x\in X\backslash \{0\}} e^{\frac{2\pi i}{N} x \cdot k} e^{-t|x|^2}d\mu_f(t)++ \pi^{\frac{d}{2}}\int_0^{\nu^2}\Big( \sum_{q \in X^*} e^{-\frac{\pi^2}{t}|q+\frac{k}{N}|^2} \Big)t^{-\frac{d}{2}}d\mu_f(t). \NT
  \end{align}
By application of Jacobi's transformation formula \eqref{prp-jacobi}, this implies
  \begin{align*}
    F[k] &= \int_{\nu^2}^{\infty}\Big(\frac{\pi^{\frac{d}{2}}}{t^{\frac d2}} \sum_{p\in X^*} e^{-\frac{\pi^2}{ t}|p+\frac{k}{N}|^2} -1  \Big)d\mu_f(t) \\
           &\qquad+ \pi^{\frac{d}{2}}\int_0^{\nu^2}\Big( \sum_{q \in X^*} e^{-\frac{\pi^2}{t}|q+\frac{k}{N}|^2} \Big)t^{-\frac{d}{2}}d\mu_f(t),
  \end{align*}
and we can rewrite this expression in terms of theta functions:
\begin{align*}
    F[k] &= \int_{\nu^2}^{\infty}\Big(\frac{\pi^{\frac{d}{2}}}{t^{\frac d2}} \theta_{X^*+\frac{k}{N}}\left( \frac{\pi}{t} \right) -1  \Big)d\mu_f(t) \\
           &\qquad+ \pi^{\frac{d}{2}}\int_0^{\nu^2} \theta_{X^*+\frac{k}{N}}\left( \frac{\pi}{t} \right)t^{-\frac{d}{2}} d\mu_f(t).
\end{align*}     
    We now conclude exactly as in the proof of the summable case.
 \end{proof}   

\subsection{Proof of Theorem \ref{thm-2} and Corollary \ref{cor-2}}

In this section, we give the proof of Theorem \ref{thm-2}. This proves Born's
conjecture for a general orthorhombic $d$-dimensional lattice distribution of
charges and for any interacting potential $f\in \FF$.

\medskip

Let us first recall the result the first author obtained in
\cite[Cor. 3.17]{BeterminPetrache} about the global optimality, for
$z\mapsto \theta_{X+z}(\alp)$, for any $\alp>0$, when $X$ is an orthorhombic
lattice. The result in \cite{BeterminPetrache} is based on a result by
Montgomery in \cite[Lemma 1]{Mont}. For the convenience of the reader, we
present the argument adapted to the particular case of our setting:
\begin{proposition}[\cite{BeterminPetrache}]\label{minsquare}
  Let $d\geq 1$ and $X=\bigoplus_{i=1}^d \Z (a_i e_i)$, where $a_i>0$ for any
  $1\leq i\leq d$.  Then we have
  \begin{align}\label{opticenter}
    \theta_{X+z^*}(\alp)\leq \theta_{X+z}(\alp), && %
                                                        \text{for any $\alp>0$ and any $z\in Q$,}
  \end{align}
  where the unique minimizer $z^*$ is the center of gravity of $Q$, i.e.
  \begin{align*}
    z^*=\frac{1}{2}(a_1,...,a_d). 
  \end{align*}
\end{proposition}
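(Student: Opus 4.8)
The plan is to exploit the product structure of an orthorhombic lattice, which makes the theta function separate into one-dimensional factors. Writing $z = \sum_{i=1}^d z_i e_i$ and using that for $x = \sum_i n_i a_i e_i \in X$ one has $|x+z|^2 = \sum_{i=1}^d (n_i a_i + z_i)^2$, I would split the Gaussian and interchange the sum with the product over coordinates to obtain
\begin{align*}
  \theta_{X+z}(\alp) = \prod_{i=1}^d \Big( \sum_{n\in\Z} e^{-\pi\alp (n a_i + z_i)^2} \Big) = \prod_{i=1}^d \theta_{a_i\Z + z_i}(\alp).
\end{align*}
Each factor is a one-dimensional translated lattice theta function, and by Lemma \ref{lem-thetaL}(ii) followed by (i) it can be written as $\theta_{a_i\Z+z_i}(\alp) = \theta_{\Z + z_i/a_i}(\alp a_i^2) = (a_i\sqrt\alp)^{-1}\vartheta_3\big(z_i/a_i; i(\alp a_i^2)^{-1}\big)$, i.e. a Jacobi theta function on the imaginary axis times a positive, $z$-independent constant.

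Second, I would reduce the $d$-dimensional minimization to $d$ independent one-dimensional ones. Since each factor is strictly positive (Lemma \ref{lem-theta}(ii)) and depends only on the single coordinate $z_i$, minimizing the product over the cell $Q = \prod_{i=1}^d [0,a_i)$ amounts to minimizing each factor separately over $z_i \in [0,a_i)$. Setting $\bet = z_i/a_i$ and $t = (\alp a_i^2)^{-1} > 0$, this is exactly the problem of minimizing $\bet \mapsto \vartheta_3(\bet; it)$ over $\bet \in [0,1)$, for arbitrary fixed $t>0$.

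Third --- and this is the only substantive step, essentially Montgomery's Lemma 1 --- I would settle the one-dimensional minimization using the Jacobi product formula of Lemma \ref{lem-theta}(i). There the entire $\bet$-dependence sits in the factors $1 + 2 e^{-(2r-1)\pi t}\cos(2\pi\bet) + e^{-2(2r-1)\pi t}$, while the prefactors $1 - e^{-2\pi r t}$ are constant in $\bet$. Each such factor is an affine, strictly increasing function of $\cos(2\pi\bet)$, since the coefficient $2e^{-(2r-1)\pi t}$ is positive; and $\cos(2\pi\bet)$ attains its minimum on $[0,1)$ uniquely at $\bet = \tfrac12$. Hence every factor is minimized precisely at $\bet = \tfrac12$, and so is their convergent, positive product, strictly and uniquely.

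Translating back through $z_i = a_i\bet = a_i/2$ then gives that $\theta_{X+z}(\alp)$ is minimized over $Q$ exactly at $z^* = \tfrac12(a_1,\dots,a_d)$, the center of $Q$, for every $\alp>0$; uniqueness follows because the one-dimensional minimum is strict in each coordinate. I do not expect a serious obstacle here, as the crucial product formula is already supplied by Lemma \ref{lem-theta}; the only points that need care are the interchange of the Gaussian sum with the finite product over coordinates, justified by absolute convergence, and the termwise minimization of the infinite product, which is legitimate precisely because all factors are positive and share the common minimizer $\bet = \tfrac12$.
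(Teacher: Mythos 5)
Your proposal is correct and follows essentially the same route as the paper: factorize the orthorhombic theta function into one-dimensional factors $\theta_{a_i\Z+z_i}(\alp)$, rescale via Lemma \ref{lem-thetaL}(ii) and (i) to reduce to $\vartheta_3(\bet;it)$, and minimize each factor at $\bet=\tfrac12$ using the Jacobi product formula of Lemma \ref{lem-theta}(i). Your phrasing of the key step as monotonicity in $\cos(2\pi\bet)$ is a slightly cleaner way of packaging what the paper states as ``decreasing on $[0,1/2]$'' combined with the symmetry \eqref{bet-12}, but the substance is identical.
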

\begin{proof}
  For any $z \in \R^d$ and in view of the bijection $\Z^d \to X$,
  $n \mapsto \sum_{i} a_i n_i$, we have
  \begin{align*}
    \theta_{X+z}(\alp)=\sum_{x\in X} e^{-\pi\alp |x+z|^2} %
    =\sum_{n\in \Z^d} e^{-\pi \alp \sum_{i=1}^d (a_i n_i +z_i)^2} %
    =\prod_{i=1}^d \theta_{a_i\Z +z_i}(\alp)
    =\prod_{i=1}^d \theta_{\Z +\frac{z_i}{a_i}}(\alp a_i^2),
  \end{align*}
  where we have used Lemma \ref{lem-thetaL}(ii). We hence have reduced the
  problem to the minimization of the translated theta function on the
  one-dimensional lattice $\Z$ and it is enough to show that
  \begin{align} \label{oned-min} %
    \theta_{ \Z + \frac{1}{2}}(\alp)\leq \theta_{\Z +\frac{z_i}{a_i}}(\alp)
  \end{align}
  for any $\alp > 0$. In order to show \eqref{oned-min}, we use the identity
  \begin{align} \label{tran-clas} %
      \frac{1}{\sqrt{t}}\theta_{\Z+\beta}(t^{-1}) = \vartheta_3(\beta;it)
  \end{align}
of Lemma
  \ref{lem-thetaL}(i) which links the translated lattice theta function of the
  translated lattice to the Jacobi theta function. The Jacobi theta function can in turn be expressed in terms of the
  Jacobian product by
  \begin{align} \label{theta-prod} %
    \vartheta_3(\beta;it)=\prod_{r=1}^{\infty} \left( 1-e^{-2\pi rt} \right)\left(
    1+2e^{-(2r-1)\pi t}\cos(2\pi\beta)+e^{-2(2r-1)\pi t} \right),
  \end{align}
  see Lemma \ref{lem-theta}\ref{theta-ii}. From the representation
  \eqref{theta-prod}, it follows directly that, for any fixed $t>0$, the
  function $\beta \mapsto \vartheta_3(\beta;it)$ is decreasing on $[0,1/2]$ since each
  factor in \eqref{theta-prod} is positive and decreasing.  By Lemma
  \ref{lem-theta}(ii) and Lemma \ref{lem-theta}(iii), it hence follows that
  the theta function $\theta_{\Z+\beta}$ takes its minimum at $\beta\in [0,1/2]$. The
  estimate \eqref{opticenter} then follows in view of \eqref{tran-clas}. This
  completes the proof.
\end{proof}

\begin{proof}[Proof of Theorem 2]
  Let $d\geq 1$, $N\geq 2$ and $f\in \FF$. If $X=\bigoplus_{i=1}^d \Z u_i$, then
  $X^*=\bigoplus_{i=1}^d \Z u_i^*$ where $u_i^*=a_i^{-1}e_i$. By Proposition
  \ref{minsquare}, the unique minimum of $z\mapsto \theta_{X^*+z}(\alp)$ is
  \begin{align*}
    z_0=\frac{1}{2}\sum_{i=1}^d u_i^*,
  \end{align*}
  for all $\alp > 0$. Note that $z_0\in \frac{1}{N}X^*$ if only if $N \in
  2\N$.
  Therefore, the unique minimizer of \eqref{ene-xi} in the class of
    functions $\xi$ which satisfy $\xi \geq 0$, $\xi_{-k}=\xi_k$ for any
    $k\in X^*$ and \eqref{xi-totalcharge} is hence given by $\xi \in \Lam_N(X^*)$,
    defined by $\xi(k_0) = N^{d}$ and $\xi(k) = 0$ for
    $k \in K_N^* \BS \{ k_0 \}$. It follows that we get the autocorrelation
    function $s$ defined, for any $n\in \Z^d$ and any $x=\sum_{i=1}^d n_i u_i$,
    by
  \begin{equation*}
  s_x=N^d (-1)^{\sum_{i=1}^d n_i}.
  \end{equation*}
  Therefore, by Theorem \ref{thm-1}, we can uniquely
  reconstruct the charge distribution $\phi$ which is
  $\phi^*(\sum_{i=1}^d n_i u_i)= (-1)^{\sum_{i=1}^d n_i}$.
\end{proof}

  \begin{proof}[Proof of Corollary \ref{cor-2}] %
    By assumption, the set of points satisfying \eqref{prop-minz0} is given by a
    single point $z_0=\frac{k_0}{N}$ for some $k_0\in K_N^*$. In view of the
    proof of Theorem \ref{thm-1}, we infer that $z_0$ is the unique minimizer of
    the translated lattice theta function in
    $\sum_{i=1}^d [0,1) u_i \BS \{ 0 \}$. In turn, by Lemma
    \ref{lem-thetasymmetry}, it then follows that
    $z_0=\frac{1}{2}\sum_{i=1}^d u_i^*$ is the center of the unit cell of
    $X^*$. The argument is concluded as in the proof of Theorem \ref{thm-2}.
  \end{proof}

\subsection{Proof of Theorem \ref{thm-3}}\label{sec-triangle}

The proof for Theorem \ref{thm-3} follows by by using Theorem \ref{thm-1}
together with a result by Baernstein in \cite{Baernstein-1997} about the
minimizer for the translated theta function in the triangular lattice.  We first
note that the dual lattice of $\Lambda_1$ is the triangular lattice $\Lam_1^*$,
defined by
\begin{align*}%
  \Lambda_1^*= \Z u_1^*\oplus \Z u_2^*, && %
                                     \text{where
                                     $u_1^* = \sqrt{\tfrac{2}{\sqrt{3}}}
                                     \VEC{\sqrt{3}/2;-1/2}$, \ \ %
                                     $u_2^* = \sqrt{\tfrac{2}{\sqrt{3}}}
                                     \VEC{0;1}$,}
\end{align*}
i.e. $\Lambda_1^*$ and $\Lambda_1$ are the same lattice, up to
rotation.
\begin{figure}[!h]
   \centering
   \includegraphics[width=8cm]{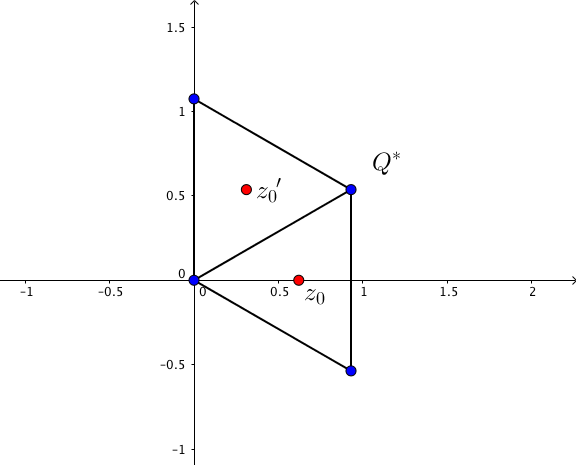}
   \caption{Primitive cell $Q^*$ of $\Lambda_1^*$ formed by two primitive triangle with barycenters $z_0$ and $z_0'$.}
   \label{Baernstein}
 \end{figure}
 For any $\alp>0$ $z\mapsto \theta_{\Lambda_1^*+z}(\alp)$ admits two minimizers
 in the set $Q^* :=[0,1)u_1^*+[0,1)u_2^*$, given by Baernstein's result
 \cite[Thm. 1]{Baernstein-1997}. These minimizers are the barycenters of the two
 primitive triangles forming $Q^*$ (see Fig.  \ref{Baernstein}), i.e.
\begin{align}\label{z0tri}
 z_0 = \frac 13 (u_1^* + u_2^*) \quad \textnormal{and}\quad
                                           z_0' = \frac 23 (u_1^* + u_2^*).
\end{align}
We note that $z_0$ and $z_0'$ belong to $\frac{1}{N}\Lambda_1^*$ if and only
  if $N \in 3\N$. Consequently, by Theorem \ref{thm-1}, the minimum among all
the periodic configurations is achieved for any $N\in 3\N$, by the
configurations defined, for any
$(m,n)\in \Z^2$, by 
\begin{align*}
  \phi^*_{\rm tri}(mu_1+nu_2) = c  \cos(2\pi (mu_1+nu_2)\cdot z_0)= c \cos\left(\frac{2\pi}{3}(m+n) \right).
\end{align*}
The value $c = \sqrt 2$ then follows from \eqref{totalcharge}.

\subsection{Proof of Theorem \ref{thm-5}}

We first show that
\begin{align} \label{fk-ep} %
F[k]=  \textnormal{Z} \VECMOD{0;A_X^t \frac{k}{N}}(q_X;s) +\frac{2\pi^{\frac{s}{2}}}{s\Gamma(\frac{s}{2})},
\end{align}
where $F[k]$ is given in \eqref{ene-xi4}. Using \eqref{AnalyticEpstein}, we
calculate
\begin{align}
  &\pi^{-\frac s2}\Gamma(\frac{s}{2})\textnormal{Z} \VECMOD{0;A_X^t \frac{k}{N}} (q_X;s) + \frac{2}{s} \\
  &\qquad = \int_1^{\infty}\sum_{n\in \Z^d\backslash \{0\}} e^{2i\pi n\cdot A_X^t \frac{k}{N}} e^{-\pi t q_X(n)} t^{\frac s2-1}dt 
    +\int_1^{\infty}\sum_{n\in \Z^d} e^{-\pi t q_{X^*}(n+A_x^t \frac{k}{N}) } t^{\frac {d-s}2-1}dt, \NT \\
  &\qquad = \int_1^{\infty}\sum_{x\in X \backslash \{0\}} e^{2i\pi x \cdot \frac{k}{N}} e^{-\pi t |x|^2} t^{\frac s2-1}dt 
    +\int_1^{\infty}\sum_{x \in X^*} e^{-\pi t |x+\frac{k}{N}|^2 } t^{\frac {d-s}2-1}dt, \NT
\end{align}
Now, we know (see e.g. \cite[Eq. (1.9)]{Laplacetransf}) that if
$f_s(x)=|x|^{-s}$, then
\begin{align*}
d\mu_{f_s}(t)=\frac{t^{\frac{s}{2}-1}}{\Gamma\left(\frac{s}{2}   \right)}  dt.
\end{align*}
Therefore, in view of \eqref{FkforEpstein} for $\alp=\sqrt{\pi}$,  we hence get \eqref{fk-ep} by a straightforward computation. Therefore, substituting $F[k]$ in \eqref{renorm-xis}, we finally obtain
\begin{align*}
\EE_{X,f_s}[\phi]&=\frac{1}{2N^d}\sum_{k\in K_N^*} \xi_k \textnormal{Z} \VECMOD{0;A_X^t \frac{k}{N}} (q_X;s)+\frac{1}{2N^d}\sum_{k\in K_N^*}\xi_k \frac{2\pi^{\frac{s}{2}}}{s\Gamma(\frac{s}{2})}-\frac{\pi^{s/2}}{s\Gamma(\frac{s}{2})}\\
&= \frac{1}{2N^d}\sum_{k\in K_N^*} \xi_k \textnormal{Z} \VECMOD{0;A_X^t \frac{k}{N}} (q_X;s)+\frac{\pi^{\frac{s}{2}}}{s\Gamma(\frac{s}{2})N^d}\left(\sum_{k\in K_N^*} \xi_k - N^d  \right)\\
&= \frac{1}{2N^d}\sum_{k\in K_N^*} \xi_k \textnormal{Z} \VECMOD{0;A_X^t \frac{k}{N}} (q_X;s)
\end{align*}
by \eqref{xi-totalcharge}.

\renewcommand{\em}[1]{\it{#1}}

\paragraph{Acknowledgements.} LB is grateful for the support of MATCH during his
stay in Heidelberg. Both authors would like to thank Florian Nolte for
interesting discussions.

\bibliographystyle{plain} \bibliography{Bornconject}
\end{document}